\newtheorem{theorem}{Theorem}[section]
\newtheorem{corollary}[theorem]{Corollary}
\newtheorem{proposition}[theorem]{Proposition}
\theoremstyle{remark}
\newtheorem{remark}[theorem]{Remark}
\theoremstyle{remark}
\newtheorem{example}[theorem]{Example}
\newtheorem{assumption}[theorem]{Assumption}
\newcommand{\inn}[1]{\left\langle #1\right\rangle}
\newcommand{\1}{\pmb{1}}
\newcommand{\tL}{\tilde{\ell}}
\title{A mathematical framework for modelling CLMM dynamics in continuous time}
\author[S.-N. Tung]{Shen-Ning Tung}
\author[T.-H. Wang]{Tai-Ho Wang}
\date{\today}
\address{Shen-Ning Tung \newline
Department of Mathematics, \newline
National Tsing Hua University \newline
No. 101, Sec. 2, Guangfu Rd., Hsinchu, Taiwan
}
\email{tung@math.nthu.edu.tw}
\address{Tai-Ho Wang \newline
Department of Mathematics \newline
Baruch College, The City University of New York \newline
1 Bernard Baruch Way, New York, NY10010 \newline
and \newline
Department of Mathematical Sciences, \newline
Ritsumeikan University \newline
1-1-1 Nojihigashi, Kusatsu, Shiga, 525-8577, Japan
}
\email{tai-ho.wang@baruch.cuny.edu}
\keywords{Automatic market making, Decentralized exchange, Decentralized finance}
\begin{document}

\begin{abstract}
This paper develops a rigorous mathematical framework for analyzing Concentrated Liquidity Market Makers (CLMMs) in Decentralized Finance (DeFi) within a continuous-time setting. We model the evolution of liquidity profiles as measure-valued processes and characterize their dynamics under continuous trading. Our analysis encompasses two critical aspects of CLMMs: the mechanics of concentrated liquidity provision and the strategic behavior of arbitrageurs. We examine three distinct arbitrage models—myopic, finite-horizon, and infinite-horizon with discounted and ergodic controls—and derive closed-form solutions for optimal arbitrage strategies under each scenario. Importantly, we demonstrate that the presence of trading fees fundamentally constrains the admissible price processes, as the inclusion of fees precludes the existence of diffusion terms in the price process to avoid infinite fee generation. This finding has significant implications for CLMM design and market efficiency.
\end{abstract}

\maketitle

\section{Introduction}
Decentralized Finance (\textbf{DeFi}) has revolutionized the financial landscape by enabling trustless, peer-to-peer transactions without intermediaries \cite{harvey2021defi, Gobet2023DecentralizedFB}. At the heart of this revolution are Automated Market Makers (\textbf{AMMs}) \cite{Capponi2021AdoptionDB, Loesch2022QuantitativeFinance}, which facilitate efficient and permissionless token exchanges. Unlike traditional limit order books (\textbf{LOBs}) that match discrete bids and asks \cite{Gould2013LimitOB}, AMMs employ mathematical formulas to determine prices based on asset quantities within liquidity pools, ensuring continuous liquidity availability and instant trade execution.

The introduction of Concentrated Liquidity Market Makers (\textbf{CLMMs}) by Uniswap V3 \cite{Adams2021UniswapV3} marked a paradigm shift in AMM design. Unlike traditional Constant Function Market Makers (\textbf{CFMMs}) \cite{Angeris2020ImprovedPO, angeris2022when, Angeris2023GeometryCFMM}, where liquidity is distributed uniformly across the entire price range according to a fixed bonding curve, CLMMs empower liquidity providers (\textbf{LPs}) to concentrate their capital within specific price intervals. This fundamental change shifts control of the market's bonding curve from a rigid, protocol-defined function to an emergent structure determined collectively by LP decisions. By enabling market participants to actively shape the liquidity distribution, CLMMs address a key inefficiency of CFMMs, where substantial capital may sit idle in rarely-traded price ranges. The resulting dynamic liquidity profile more closely resembles traditional limit order books \cite{Milionis2023complexityapproximation}, combining the capital efficiency of order book markets with the automation and permissionless nature of AMMs.

Recent literature has made significant strides in understanding CLMM dynamics. Heimbach et al.\ \cite{heimbach2023risks} provide a comprehensive analysis of the risks and returns associated with concentrated liquidity provision. Echenim et al.\ \cite{echenim2023thorough} develop precise models of CLMM mechanics using ``liquidity curves'' to analyze strategies and calculate LP fee earnings. Urusov et al.\ \cite{urusov2024backtesting} contribute specialized backtesting frameworks for evaluating CLMM performance.

Milionis et al.\ \cite{Milionis2022AutomatedMM, Milionis2022Quantifying} explored the theoretical foundations of AMM dynamics under myopic arbitrage, deriving a Black-Scholes-type formula, termed \textit{``loss-versus-rebalancing''} (\textbf{LVR}), to quantify LPs' adverse selection costs. Their subsequent work \cite{Milionis2023AutomatedMM} extended this analysis to incorporate trading fees and discrete block generation times. Fukasawa et al. \cite{fukasawa2023weighted, fukasawa2023modelfree} discussed model-free hedging strategies in the context of Geometric Mean Market Makers (\textbf{G3Ms}). Further research has broadened the analytical framework, with Bergault et al.\ \cite{bergault2024priceaware} and Cartea et al.\ \cite{cartea2023automated} exploring innovative AMM designs based on stochastic control theory.

Despite these advances in understanding CLMMs, a unified mathematical framework that fully captures the dynamic interplay between concentrated liquidity provision, price movements, and trading fees, comparable to those developed for LOBs \cite{Cont2010StochasticModel, Cont2013PriceDynamics, Cont2021StochasticPartial, ContDegondXuan2023}, has remained elusive. This paper addresses this gap through three main contributions:
\begin{enumerate}
\item We introduce a framework for modeling liquidity profiles as measure-valued processes (see Equations \eqref{eqn:CLMM_LP_xy} and \eqref{eqn:CLMM_orderflow_xy}). This formulation precisely characterizes how concentrated liquidity affects market behavior and trading outcomes.
\item Building upon the joint work with C.-Y. Lee \ \cite{lee2024growth}, we employ stochastic analysis and control theory to analyze liquidity profile dynamics under arbitrage. This analysis reveals that the presence of trading fees imposes fundamental constraints on admissible price processes, specifically precluding diffusion terms to avoid infinite fee generation.
\item We derive closed-form solutions for optimal arbitrage strategies under three distinct scenarios: myopic arbitrage, finite-horizon optimization, and infinite-horizon optimization with discounted or ergodic control. These solutions provide valuable insights into how rational actors interact with CLMMs and influence price discovery and market efficiency.
\end{enumerate}

Our findings have significant implications for CLMM design and management, offering guidance on optimizing fee structures and liquidity incentives to enhance market efficiency. The framework we develop bridges classical financial theory with emerging DeFi mechanisms, providing valuable insights for both researchers and practitioners in the field.

\subsection*{Outline}
The remainder of this paper is structured as follows: Section \ref{section:CFMM} examines the fundamentals of CFMMs, establishing the mathematical groundwork. Section \ref{section:CLMM} introduces liquidity profiles in CLMMs and develops the measure-theoretic framework for analyzing their properties. Section \ref{section:Arbitrage} presents our main results on CLMM dynamics, including the characterization of optimal arbitrage strategies and their implications for price processes. Finally, Section \ref{section:Conclusion} concludes with a discussion of practical implications and directions for future research.

\section{Automated Market Maker Fundamentals} \label{section:CFMM}
This section provides a concise review of Constant Function Market Makers. Readers already familiar with this topic can proceed to the next section.

\subsection{Constant Function Market Makers}
\textit{Constant Function Market Makers} (\textbf{CFMMs}) \cite{Angeris2020ImprovedPO, angeris2022when, Angeris2023GeometryCFMM} is a fundamental class of AMMs that operate based on a bonding function. These systems facilitate decentralized trading by enabling interactions between two primary categories of participants:
\begin{itemize}
\item \textbf{Liquidity Providers (LPs):} Users who deposit assets into the CFMM and receive liquidity tokens representing their pool share.
\item \textbf{Liquidity Takers (Traders):} Users who exchange one asset for another using the liquidity provided in the pool.
\end{itemize}

\subsubsection{Trading Mechanism}
The core mechanism of a CFMM relies on its bonding function, which represents a level set:
$$
f(x,y) = \ell,
$$
where
\begin{itemize}
    \item $x, y \in \mathbb{R}_+$ represent the quantities of the risk asset and numéraire, respectively;
    \item $\ell \in \mathbb{R}_+$ denotes the liquidity amount of the pool;
    \item $f: \mathbb{R}_+ \times \mathbb{R}_+ \to \mathbb{R}_+$ is a twice continuously differentiable function.
\end{itemize}

%Trading Constraints
A trade, represented by quantity changes $(\Delta x, \Delta y)$, is considered valid if and only if it preserves the bonding function:
$$
f(x+\Delta x, y+\Delta y) = f(x, y).
$$
This constraint ensures that the liquidity amount remains constant throughout each trade, maintaining the fundamental relationship between asset quantities.

%Properties of Valid Trading Functions
For a CFMM to support viable trading, the bonding function must satisfy several key properties:
\begin{itemize}
    \item Monotonicity: The implicit function $y(x)$ derived from $f(x, y) = \ell$ must be strictly decreasing.
    \item Convexity: The trading curve must be convex (see Equation \eqref{eqn:price_impact}) to ensure price impact increases with trade size.
    \item Scaling: For any $\lambda>0$, $f(\lambda x,\lambda y) = \lambda \ell$. This ensures reserves $(x,y)$ scale linearly with the liquidity amount $\ell$.
\end{itemize}

\subsubsection{Price Formation}
%Instantaneous Price
The instantaneous price (or spot price) of the risk asset in terms of the numéraire is derived using the implicit function theorem:
\begin{equation} \label{eqn:inf_price}
P = -\frac{dy}{dx} = \frac{f_x}{f_y},
\end{equation}
where $f_x$ and $f_y$ are partial derivatives of $f$ with respect to $x$ and $y$. This ratio of partial derivatives represents the slope of the tangent line to the trading curve at the point $(x,y)$.
%Price-Liquidity Relationship
For a given liquidity level $\ell$, we can express $x$ as a function of $y$:
$$
x = \phi(y, \ell)
$$
where $\phi$ is obtained by solving $f(x, y) = \ell$ for $x$. This allows us to express price as a function of $y$ and $\ell$:
$$
P = \frac{f_x(\phi(y, \ell),y)}{f_y(\phi(y, \ell), y)}.
$$
%Price Impact
The marginal price impact of a trade is given by
\begin{equation} \label{eqn:price_impact}
\frac{dP}{dx} = -\frac{d^2 y}{d x^2} = - \frac{f_{xx} f_y - f_x f_{xy}}{(f_y)^2}.
\end{equation}
This relationship, which is positive due to the convexity of the bonding function, demonstrates how larger trades move the price more significantly, creating a natural price discovery mechanism.

\subsubsection{Liquidity Provision}
Liquidity providers (LPs) play a crucial role in CFMMs by depositing assets into the pool. When LPs add or remove liquidity, they do so proportionally to the existing reserves. In exchange for providing liquidity, LPs receive liquidity tokens. These tokens represent their pool share and can be redeemed later for a corresponding portion of the pool's assets. The minting and redemption mechanisms ensure that the relative proportions of the assets in the pool remain consistent as liquidity is added or removed. The additivity of liquidity across multiple LPs allows for efficient tracking and management of liquidity provision.

\subsubsection{Trading Fees}
To incentivize users to provide liquidity, CFMMs incorporate a trading fee. This fee is typically denoted by a parameter $\gamma \in (0,1)$, where $1-\gamma$ represents the percentage of the trade value that is collected as a fee. Common values for $\gamma$ range from 99\% to 99.99\%. The trading fee modifies the trading mechanism as follows:
\begin{itemize}
    \item For buying the risk asset ($\Delta x > 0$):
    $$
    f(x + \gamma \Delta x, y + \Delta y) = f(x,y).
    $$
    \item For selling the risk asset ($\Delta x < 0$):
    $$
    f(x + \Delta x, y + \gamma \Delta y) = f(x,y).
    $$
\end{itemize}
%Bid-Ask Spread
The trading fee creates an infinitesimal bid-ask spread, which is the difference between the price at which traders can buy and sell an infinitesimal amount of the risk asset:
$$
P_{bid} = \gamma^{-1} P, \quad P_{ask} = \gamma P,
$$
where $P$ is the instantaneous price (mid-price) derived earlier. This creates a bid-ask spread of
$$
P_{bid} - P_{ask} = (\gamma^{-1} - \gamma) P.
$$
%Fee Accumulation
The accumulated fees in terms of each asset are
$$
\Delta F_x = \frac{1-\gamma}{\gamma} \Delta x^+, \quad
\Delta F_y = \frac{1-\gamma}{\gamma} \Delta y^+
$$
where $\Delta x^+$ and $\Delta y^+$ represent positive changes in respective asset quantities.

\subsubsection{Pool Valuation}
The total value $V$ of the liquidity pool, expressed in terms of the numéraire (asset $Y$), is given by
\begin{equation} \label{eqn:pool_value}
    V = Px + y,
\end{equation}
where $P$ is the instantaneous price. This formula allows for a straightforward calculation of the pool's overall worth.
%Value Function Properties
The pool's value function exhibits several important properties:
\begin{enumerate}
    \item Homogeneity: $V(\lambda x, \lambda y) = \lambda V(x, y)$ for $\lambda > 0$.
    \item Concavity in price: $\frac{\partial^2 V}{\partial P^2} = \frac{dx}{dP} \leq 0$.
\end{enumerate}
%Mark-to-Market Value
For a liquidity provider owning a fraction $\alpha$ of the pool, their position value is
$$
V_{\alpha} = \alpha(Px + y).
$$

\subsubsection{Examples:G3Ms}
Geometric Mean Market Makers (G3Ms), popularized by Balancer Protocol \cite{Martinelli2019Balancer, Evans2021Liquidity, Angeris2021Analysis}, employ the following bonding function:
\begin{equation} \label{eqn:G3M_bonding}
f(x,y) := x^w y^{1-w} = \ell,
\end{equation}
where $w \in (0,1)$ represents the weight of asset $X$ in the pool. This function enforces a constant weighted geometric mean relationship between the quantities of the two assets, $x$ and $y$. The instantaneous price $P$ in a G3M is given by
\begin{equation} \label{eqn:G3M_price}
P = -\frac{dy}{dx} = \frac{w}{1-w} \frac{y}{x}.
\end{equation}
Using \eqref{eqn:G3M_bonding} and \eqref{eqn:G3M_price}, we can express the reserves $x$ and $y$ in terms of the liquidity $\ell$ and price $P$:
\begin{equation} \label{eqn:inventory_G3M}
x = \left(\frac{w}{1-w} \right)^{1-w} LP^{w-1}, \quad 
y = \left(\frac{1-w}{w} \right)^{w} \ell P^{w}.
\end{equation}
This allows us to write the value function for G3M liquidity providers as
\begin{equation} \label{eqn:G3M_value}
V(P, \ell) = Px + y = \frac{\ell P^w}{w^{w} (1-w)^{1-w}}.
\end{equation}
This formulation demonstrates that a G3M maintains a portfolio with a fixed weight $w$ in asset $X$ and $(1-w)$ in asset $Y$, effectively implementing an automated constant rebalancing strategy.

A special case of the G3M is the Constant Product Market Maker (CPMM), popularized by Uniswap V2 \cite{Adams2020UniswapV2, Angeris2021Analysis}. This corresponds to setting $w = \frac12$ in the G3M bonding function, resulting in the familiar constant product formula: $xy = \ell^2$

\begin{figure}[ht]
\centering
\includegraphics[width=0.7\linewidth]{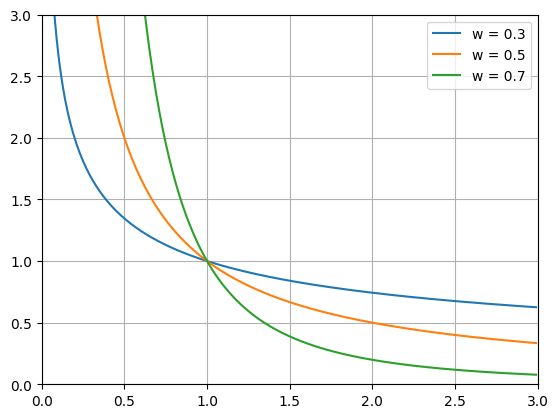}
\caption{G3M bonding curves for different values of $w$. These curves illustrate the relationship between the quantities of the two assets in the pool, as dictated by the G3M bonding function.}
\end{figure}

\subsection{Liquidity Provider Risks}
LPs in CFMMs face several inherent risks, with \textit{adverse selection} among the most significant. This risk arises when informed traders, particularly arbitrageurs, exploit price discrepancies between the CFMM and external markets. While such arbitrage activities help maintain market efficiency, they can substantially impact LPs' returns. This section provides a rigorous mathematical analysis of these risks, building upon and extending the framework presented in \cite{Milionis2022Quantifying, Milionis2022AutomatedMM, fukasawa2023weighted, fukasawa2023modelfree}.

\subsubsection{Impermanent loss}
\textit{Impermanent loss} (\textbf{IL}), also known as divergence loss, quantifies the opportunity cost of providing liquidity to a CFMM compared to simply holding the assets. We begin by establishing a formal framework for analyzing this phenomenon.

%Initial Position and Holdings
Consider an LP with initial holdings of $x_0 = x(P_0, \ell)$ units of asset $X$ and $y_0 = y(P_0, \ell)$ units of asset $Y$, where $P_0$ is the initial price of asset $X$ in terms of asset $Y$ within the CFMM pool.

%Hold Value Evolution
If the LP holds these assets without trading, the value of their portfolio $H_t$ at time $t$ evolves according to
$$
dH_t = x(P_0) dP_t.
$$
Integrating this equation yields the portfolio value at time $t$:
$$
H_t = x_0 P_t + y_0 = x_0 P_0 + y_0 + x_0 \int_0^t dP_s.
$$
%LP Position Value Evolution
When providing liquidity to the CFMM pool, the position value $V_t$ at time $t$ is
$$
V_t = V(P_t) = x_tP_t + y_t = x(P_t) P_t + y(P_t).
$$
Applying Itô's formula to $V_t$ yields
$$
dV_t = V'(P_t) dP_t + \frac12 V''(P_t) d\langle P \rangle_t = x(P_t) dP_t + \frac12 V''(P_t) d\langle P \rangle_t,
$$
where $d\langle P \rangle$ denotes the quadratic variation of the price process. Here, we used the fact that
$$
V'(P_t) = \frac{d}{dP}\left(P x + y\right) = x + P\frac{dx}{dP} + \frac{dy}{dP} = x.
$$
%Impermanent Loss Quantification
The impermanent loss, ${\rm IL}_t:= H_t - V_t$, represents the difference in value between holding assets and providing liquidity:
$$
d{\rm IL}_t = dH_t - dV_t = \left\{x(P_0) - x(P_t)\right\} dP_t - \frac12 V''(P_t) d\langle P \rangle_t
$$
or equivalently, in integral form:
$$
{\rm IL}_t = H_t - V_t = \int_0^t \left\{x(P_0) - x(P_s)\right\} dP_s - \int_0^t\frac12 V''(P_s) d\langle P \rangle_s.
$$
Given that $V''(P) = x'(P) \leq 0$ (since the amount of asset $X$ held in the pool decreases as the price $P$ increases), and assuming that the price process $P_t$ is a martingale, we have
$$
\mathbb{E}[{\rm IL}_t] = -\mathbb{E}\left[\int_0^t\frac{1}{2} V''(P_s) d\langle P\rangle_s\right] \geq 0.
$$
This result demonstrates that, on average, an LP would have been better off holding the assets rather than providing liquidity.

\subsubsection{Loss-versus-Rebalance}
The quadratic variation term in the IL equation can be attributed to the concept of \textit{loss-versus-rebalance} (\textbf{LVR}).

%Rebalancing Strategy
Consider a self-financing trading strategy that continuously rebalances the portfolio to hold $x(P_t)$ units of asset $X$. Let $R_t$ denote the strategy's value at time $t$, with initial value $R_0 = x_0 P_0 + y_0$. The self-financing condition implies
$$
dR_t = x(P_t) dP_t.
$$
%LVR Quantification
Define the loss-versus-rebalance as ${\rm LVR}_t := R_t - V_t$. Then,
$$
d{\rm LVR}_t := dR_t - dV_t = -\frac12 V''(P_t) d\langle P \rangle_t \geq 0.
$$
This inequality holds because $V''(P) = x'(P) \leq 0$.
%IL Decomposition
The impermanent loss can be decomposed as
\begin{equation} \label{eqn:IL_decomp}
{\rm IL}_t = \int_0^t \left\{x(P_0) - x(P_s)\right\} dP_s + {\rm LVR}_t.
\end{equation}
This decomposition reveals two components:
\begin{enumerate}
    \item A martingale component that can be hedged through dynamic position adjustment.
    \item A non-negative drift component representing the loss-versus-rebalance.
\end{enumerate}

\subsubsection{Example: IL and LVR in G3Ms}
For G3Ms, applying Itô's formula to the value function \eqref{eqn:G3M_value} yields
$$
dV_t = \left(\frac{w}{1-w}\right)^{1-w} \ell P_t^{w-1} dP_t - \frac{w^{1-w}(1-w)^w}{2} \ell P_t^{w-2} d\langle P \rangle_t.
$$
In comparison, the value $R_t$ of a self-financing portfolio continuously rebalanced to hold $x_t$ units of the risky asset evolves according to
$$
dR_t = x_t dP_t = \left(\frac{w}{1-w}\right)^{1-w} \ell P_t^{w-1} dP_t
$$
The difference between these two processes represents the LVR:
$$
d{\rm LVR}_t = dR_t - dV_t = \frac{w^{1-w}(1-w)^w}{2} \ell P_t^{w-2} d\langle P \rangle_t.
$$
Using the IL decomposition \eqref{eqn:IL_decomp}, we obtain
$$
{\rm IL}_t = \int_0^t (x_0 - x_s) dP_s + \frac{w^{1-w}(1-w)^w}{2} \ell \int_0^t P_s^{w-2} d\langle P \rangle_s.
$$
This example demonstrates how the concepts of IL and LVR can be explicitly quantified for the specific case of a G3M.
\section{Concentrated Liquidity Market Makers} \label{section:CLMM}
Unlike traditional CFMMs, where liquidity is distributed uniformly along the entire price curve, \textit{Concentrated Liquidity Market Makers} (\textbf{CLMMs}) \cite{Adams2021UniswapV3} enable LPs to concentrate their capital within specific price ranges, substantially improving capital efficiency. This design allows LPs to optimize their returns by allocating liquidity where it is most likely to be used, reducing the amount of capital "idle" in price ranges with minimal trading activity.

\subsection{CLMM Architecture}
\subsubsection{Position Structure and Mechanics}
A CLMM LP position is defined by a pair $(\ell, [p_l, p_u])$, where $\ell$ represents the amount of liquidity and $0 < p_l < p_u < \infty$ defines the chosen price range for liquidity provision. The required reserves depend on the current pool price $P$:
\begin{enumerate}
    \item If $P \in [p_l, p_u]$, the LP deposits $\ell \left(\frac1{\sqrt P} - \frac1{\sqrt p_u} \right)$ of token $X$ and $\ell \left(\frac1{\sqrt P} - \frac1{\sqrt p_u} \right)$ of token $Y$ into the pool.
    \item If $P > p_u$, LP deposits only $\ell \left(\sqrt{p_u} - \sqrt{p_l}\right)$ of token $Y$ into the pool, as the price is above their specified range.
    \item If $P < p_l$, the LP deposits only $\ell \left(\frac1{\sqrt{p_l}} - \frac1{\sqrt{p_u}}\right)$ of token $X$ into the pool, as the price is below their specified range. 
\end{enumerate}

\subsubsection{Reserve Functions and Bonding Curves}
The complete reserve functions for any price $p$ can be expressed as:
\begin{align} \label{eqn:CLMM_xy}
& x(p) = \ell \left(\frac1{\sqrt p} - \frac1{\sqrt p_u} \right)^+ - \ell \left(\frac1{\sqrt p} - \frac1{\sqrt p_l} \right)^+, \\
& y(p) = \ell \left(\sqrt p - \sqrt p_l \right)^+ - \ell \left(\sqrt p - \sqrt p_u \right)^+. \notag
\end{align}
These reserve functions resemble the payoff function of a bull spread in a transformed price space (see, for instance, Equation \eqref{eqn:x_in_s}).

For $p \in [p_l, p_u]$, eliminating the parameter $p$ in Equation \eqref{eqn:CLMM_xy} yields the bonding curve
$$
\left( x + \frac{\ell}{\sqrt{p_u}} \right)^{1/2} \left(y + \ell\sqrt{p_l} \right)^{1/2} = \ell,
$$
or equivalently,
$$
\left( \frac{x}{\ell} + \frac1{\sqrt{p_u}}\right)\left( \frac{y}{\ell} + \sqrt{p_l} \right) = 1.
$$
%CLMM LP Value
Using Equation \eqref{eqn:pool_value} and \eqref{eqn:CLMM_xy}, the value of an LP position is given by
\begin{align}
    V(P)
    &= \ell P \left[ \left(\frac1{\sqrt P} - \frac1{\sqrt{p_u}} \right)^+ - \left(\frac1{\sqrt P} - \frac1{\sqrt{p_l}} \right)^+ + \left(\frac1{\sqrt P} - \frac{\sqrt{p_l}}{P} \right)^+ - \left(\frac1{\sqrt P} - \frac{\sqrt{p_u}}{P} \right)^+ \right] \notag \\
    &= \begin{cases}
        \ell P \left( \frac1{\sqrt{p_l}} - \frac1{\sqrt{p_u}} \right) &\text{ if } P < p_l, \\
        \ell P \left( \frac2{\sqrt{P}} - \frac{1}{\sqrt{p_u}} - \frac{\sqrt{p_l}}{P} \right) &\text{ if } p_l \leq P \leq p_u, \\
        \ell \left( \sqrt{p_u} - \sqrt{p_l} \right) &\text{ if } P > p_u,
    \end{cases} \label{eqn:CLMM_LP_value}
\end{align}
where $(z)^+ = \max(z,0)$ denotes the positive part of $z$.

\subsubsection{Relationship to Traditional CFMMs}
Uniswap V3 introduced the concept of concentrated liquidity, allowing LPs to provide liquidity within specific price ranges. This effectively provides liquidity on a "portion" of the Uniswap V2 bonding curve (see Figure \ref{fig:CLMM_bonding}).

Key distinctions between CLMMs and traditional CFMMs include:
\begin{itemize}
    \item In a CPMM, the bonding curve never intersects the axes, implying that liquidity is provided across the entire price range $[0,\infty)$.
    \item In a CLMM, if a single LP provides liquidity within the range $[p_l, p_u]$, the bonding curve intersects the $x$-axis at $x^* = \ell \left(\frac1{\sqrt p_l} - \frac1{\sqrt p_u}\right)$ and the $y$-axis at $y^*=L\left({\sqrt p_u} - {\sqrt p_l}\right)$. This indicates that no swaps are possible beyond the LP's specified liquidity range.
\end{itemize}

\begin{figure}[ht]
\centering
\includegraphics[width=0.7\linewidth]{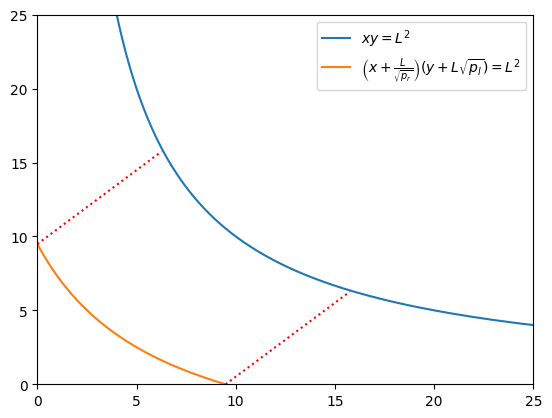}
\caption{CLMM bonding curve with $L=10$, $p_l=0.4$, $p_u=2.5$}
\label{fig:CLMM_bonding}
\end{figure}

\subsubsection{Connection to Covered Call Strategies}
Providing liquidity in a CLMM can be likened to a covered call strategy in traditional finance. Let $p_m = \sqrt{p_l p_u}$ be the geometric mean of the lower and upper bounds of the LP's price range. We can then express these bounds as 
$$
p_l = \frac{p_m}{r}, \quad p_u = r p_m,
$$
where $r = \frac{p_u}{p_l} > 1$ represents the relative width of the price range. With this notation, the LP's value function \eqref{eqn:CLMM_LP_value} can be rewritten as
\begin{align*}
    V(k)
    &= \ell p_m^{\frac12} \left[ \left( \sqrt k - \frac{k}{\sqrt r} \right)^+ - \left( \sqrt k - k \sqrt r \right)^+ + \left( \sqrt k - \frac{1}{\sqrt r} \right)^+ - \left( \sqrt k - \sqrt r \right)^+ \right] \notag \\
    &= \begin{cases}
        \ell \sqrt{p_m} \left( \sqrt r - \frac1{ \sqrt r} \right) k &\text{ if } k < r^{-1}, \\
        \ell \sqrt{p_m} \left( 2 - \frac{\sqrt{k}}{\sqrt r} - \frac1{\sqrt{kr}} \right) \sqrt{k} &\text{ if } r^{-1} \leq k \leq r, \\
        \ell \sqrt{p_m} \left( \sqrt r - \frac1{\sqrt r} \right) &\text{ if } k > r,
    \end{cases}
\end{align*}
where $k = \frac{P}{p_m}$ is the ratio of the current price $P$ to the center price $p_m$. This function exhibits the following properties:
\begin{itemize}
    \item For any $k \geq 0$,  
    \begin{equation} \label{eqn:V-vs-covered-call}
        V(k) \leq \ell \sqrt{p_m}\left(\sqrt r - \frac1{\sqrt r}\right) \left\{k - (k - 1)^+\right\};
    \end{equation}
    \item As the price range narrows ($r \to 1$, implying $p_l \to p_u$), the LP's value function converges to the right-hand side of inequality \eqref{eqn:V-vs-covered-call}.
\end{itemize}
These properties illustrate that the LP's value function resembles the payoff of a portfolio of $\ell \sqrt{p_m}\left(\sqrt r - \frac1{\sqrt r}\right)$ covered call options with a strike price of 1 in the $k$-domain (see Figure \ref{fig:CLMM_value}). This analogy highlights the similarity between concentrated liquidity provision and holding covered calls, where the LP's potential fees earned from trading activity are akin to the premiums received from selling call options.

\begin{figure}[ht]
\centering
\includegraphics[width=0.7\linewidth]{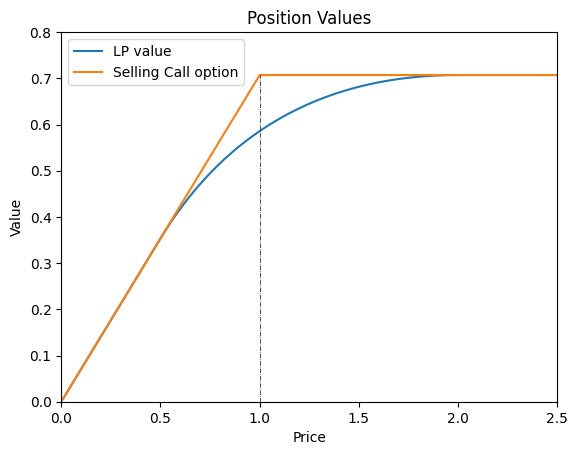}
\caption{CLMM LP value with $L=1$, $p_l = 0.5$, $p_u = 2$}
\label{fig:CLMM_value}
\end{figure}

\subsection{Liquidity Profile}
In Uniswap V3, liquidity providers gain the ability to concentrate their liquidity within specific price ranges, leading to a piecewise constant or step-function representation of liquidity within the pool. To formalize this concept, we introduce the notion of a liquidity profile, denoted by $\ell = \ell(P)$, which describes the distribution of liquidity as a function of the price $P$ over the entire price spectrum $(0,\infty)$. This profile provides a comprehensive view of how liquidity is allocated across different price levels, capturing the essence of concentrated liquidity provision.

\subsubsection{Single Position Analysis} \label{section:single_CLMM}
To establish a foundation for understanding liquidity profiles, we begin by analyzing a single CLMM position with liquidity $\ell$ concentrated within a specific price range $[p_l,p_u]$. We utilize a change of variables approach to simplify the analysis of the reserve functions.

For the reserve function $x(p)$, representing the quantity of asset $X$ held in the pool, we employ the substitution $s = \frac{1}{\sqrt{p}}$. This transformation yields
\begin{equation} \label{eqn:x_in_s}
x\left(\frac{1}{s^2}\right) = \ell \left\{(s - s_r)^+ - (s - s_l)^+ \right\},
\end{equation}
where $s_r = \frac{1}{\sqrt{p_r}}$ and $s_l = \frac{1}{\sqrt{p_l}}$. Notably, the right-hand side of Equation \eqref{eqn:x_in_s} corresponds to the payoff function of a long position in $\ell$ bull spreads with strikes  $s_r$ and $s_l$ (where $s_r < s_l$) in the $s$-domain. This connection to option payoffs provides an intuitive understanding of the reserve function's behavior.

Equation \eqref{eqn:x_in_s} can be equivalently represented using a Lebesgue integral:
$$
x\left(\frac{1}{s^2}\right) = \int_{\mathbb{R}^+} (s - k)^+ d\tilde{\ell}(k) = \int_{[0, s]} (s - k) d\tilde{\ell}(k),
$$
where $d\tL$ is a signed measure defined as the sum of Dirac delta functions:
\begin{equation} \label{eqn:single-Dirac}
    d\tilde{\ell}(k) = \ell \{\delta(s_r) - \delta(s_l)\}.
\end{equation}
Let $\tL(k) = \int_{[0,k]} d\tL(r)$ be the cumulative distribution function of $d\tL(r)$.  Assuming $\tL(0) = 0$ and integrating by parts, we obtain
$$
x\left(\frac1{s^2}\right) = (s - k) \tilde{\ell}(k)|_0^s + \int_{[0, s]} \tilde{\ell}(k) dk = \int_{[0, s]} \tilde{\ell}(k) dk.
$$
Transforming back to the price space with the substitution $k = \frac1{\sqrt q}$, we arrive at
\begin{equation} \label{eqn:x_dist_L}
x(p) = x\left(\frac1{s^2}\right) = \int_{[0,s]} \tL(k) dk = \frac12 \int_p^{\infty} \ell(q) q^{-\frac32} dq,
\end{equation}
where $\ell(p) := \tL\left( \frac1{\sqrt p} \right)$. 

Similarly, for the reserve function $y(p)$, representing the quantity of asset $Y$, we substitute $t = \sqrt{p}$ into \eqref{eqn:CLMM_xy} to obtain
$$
y(p) = y(t^2) = \ell \left[(t - t_l)^+ - (t - t_u)^+\right],
$$
where $t_l = \sqrt{p_l}$, $t_u=\sqrt{p_u}$. An analogous analysis leads to
\begin{equation} \label{eqn:y_dist_L}
y(p) = \int_{[0, t]} \tL(k) dk = \frac12 \int_0^{p} \ell(q) q^{-\frac12} dq,
\end{equation}
where $\ell(p) := \tL(\sqrt p)$.

It is crucial to note that the measure and its associated distribution function in Equation \eqref{eqn:single-Dirac} are linear in the liquidity parameter $\ell$. This linearity allows us to extend the above analysis and equalities to general $\sigma$-finite signed measures through a standard limiting process, providing a framework for analyzing more complex liquidity profiles.

\subsubsection{Liquidity as a Distribution}
Building on the analysis in Section \ref{section:single_CLMM}, we generalize the concept of liquidity to be a function of the instantaneous price $P$, represented by the \textit{liquidity profile} $\ell(P)$. This function describes the distribution of liquidity across different price levels, capturing the essence of concentrated liquidity provision.

%Measure-Theoretic Properties
From a measure-theoretic perspective, $d\ell(p)$ can be interpreted as a $\sigma$-finite signed measure on the interval $[0, \infty)$, with $\ell(p)$ serving as its cumulative distribution function (CDF). This interpretation allows us to leverage the tools of measure theory to analyze the properties of liquidity profiles and their impact on CLMM behavior.

%Reserve Curve Characteristics
Using Equations \eqref{eqn:x_dist_L} and \eqref{eqn:y_dist_L}, we express the pool reserves in terms of the instantaneous price $P$:
\begin{equation} \label{eqn:CLMM_LP_xy}
x(P) = \frac12 \int_P^\infty \ell(p) p^{-\frac32} dp, \quad
y(P) = \frac12 \int_0^P \ell(p) p^{-\frac12} dp.
\end{equation}
These equations establish a direct relationship between the reserves and the liquidity profile, highlighting how the distribution of liquidity influences the quantities of assets held in the pool. Differentiating \eqref{eqn:CLMM_LP_xy} with respect to $P$, we confirm that
$$
\frac{d y}{dx} = \frac{\frac{dy}{dP}}{\frac{dx}{dP}} = -P < 0.
$$
This result verifies that, as in standard CFMMs, the instantaneous price $P$ corresponds to the negative slope of the tangent to the reserve curve. Further analysis of the shape of this curve reveals that
$$
\frac{d^2 y}{dx^2} = \frac{\frac1{dP}\left(\frac{dy}{dx}\right)}{\frac{dx}{dP}} = \frac{-1}{-\frac{\ell(P)}{2P^{\frac32}}} = \frac{2P^{\frac32}}{\ell(P)}> 0.
$$
These findings indicate that the reserve curve is both decreasing and convex, ensuring that the price impact of trades increases with trade size, a crucial property for market stability.

Furthermore, assume the boundary conditions 
$$
\lim_{p \to \infty} \ell(p) p^{-\frac12} = 0 \quad \text{and} \quad \lim_{p \to 0} \ell(p) p^{\frac12} = 0,
$$
and applying integration by parts to Equation \eqref{eqn:CLMM_LP_xy}, we obtain
\begin{align*}
x(P) &= \frac{\ell(P)}{\sqrt P} + \int_P^\infty \frac1{\sqrt p} d\ell(p), \\
y(P) &= \ell(P) \sqrt P - \int_0^P \sqrt p d\ell(p).
\end{align*}
These expressions provide an alternative representation of the pool reserves in terms of the liquidity profile and its integral.

\subsubsection{Examples and Empirical Data}
\begin{example}[Uniform Liquidity]
The simplest case is a uniform liquidity distribution, where $\ell(P) = \ell$ for all $P \in (0,\infty)$. This corresponds to the CPMM, where liquidity is evenly distributed across all price levels. In this case, Equations \eqref{eqn:CLMM_LP_xy} reduce to the familiar constant product formula $xy=L^2$.

\begin{figure}[ht]
\centering
\includegraphics[width=0.8\linewidth]{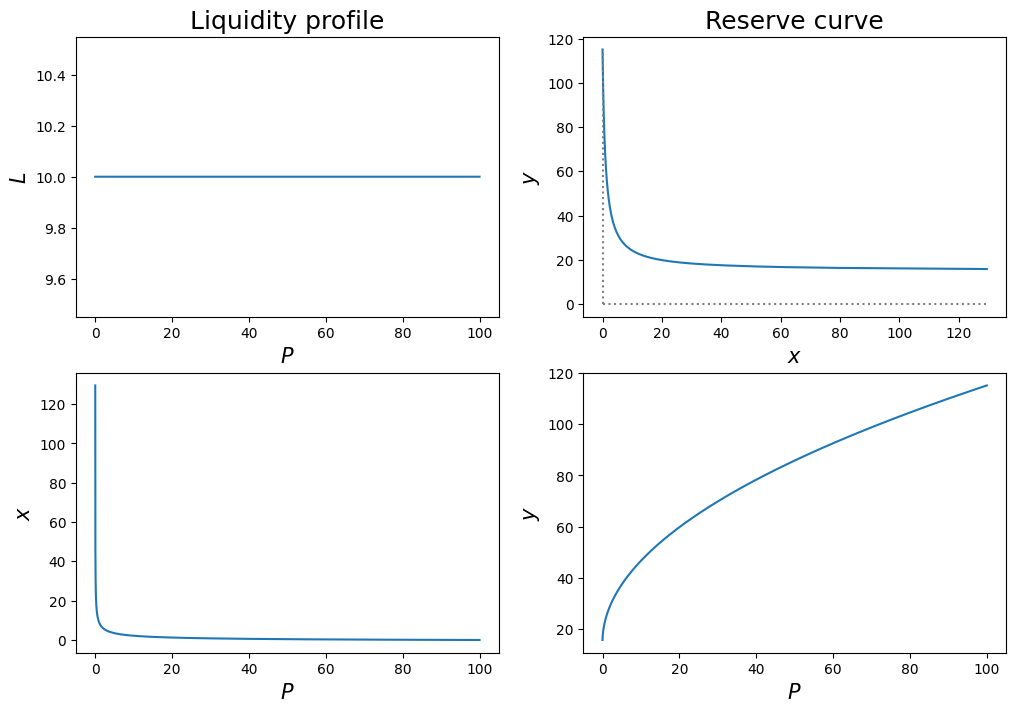}
\caption{Liquidity profile for Uniswap V2, illustrating a constant liquidity distribution across all prices.}
\end{figure}
\end{example}

\begin{example}[Two Disjoint Positions]
Consider a liquidity profile constructed by combining two distinct LP positions:
$$
\ell(p) = k_1 \1_{[a, b)}(p) + k_2 \1_{[b, c)}(p),
$$
where $k_1$ and $k_2$ are positive constants and $\1_{[a,b)}$ denotes the indicator function of the interval $[a,b)$. This profile represents two concentrated liquidity positions with different intensities within distinct, non-overlapping price ranges. Calculating the integrals in \eqref{eqn:CLMM_LP_xy} for the different price intervals yields:
\begin{enumerate}
    \item For $P \geq c$, 
    \begin{align*}
        x = 0, \quad y = -k_1 \sqrt a - (k_2 - k_1)\sqrt b + k_2\sqrt c = k_1\left(\sqrt b - \sqrt a\right) + k_2\left(\sqrt c - \sqrt b\right).
    \end{align*}
    \item For $b \leq P < c$, 
    \begin{align*}
        x &= \frac{\ell(P)}{\sqrt P} + \int_P^\infty \frac1{\sqrt p} d\ell(p) = \frac{k_2}{\sqrt P} - \frac{k_2}{\sqrt c} = k_2 \left(\frac1{\sqrt P} - \frac1{\sqrt c}\right), \\
        y &= \sqrt P \ell(P) - \int_0^P \sqrt p d\ell(p) = k_2 \sqrt P - k_1 \sqrt a - (k_2 - k_1)\sqrt b \\
        &= k_1 \left(\sqrt b - \sqrt a\right) + k_2 \left(\sqrt P - \sqrt b\right).
    \end{align*}
    \item For $a \leq P < b$,
    \begin{align*}
        x &= \frac{\ell(P)}{\sqrt P} + \int_P^\infty \frac1{\sqrt p}d\ell(p) = \frac{k_1}{\sqrt P} + (k_2 - k_1) \left(\frac1{\sqrt b} - \frac1{\sqrt c}\right) - \frac{k_2}{\sqrt c} \\
        &= k_1 \left(\frac1{\sqrt P} - \frac1{\sqrt b}\right) + k_2 \left(\frac1{\sqrt b} - \frac1{\sqrt c}\right), \\
        y &= \sqrt P \ell(P) - \int_0^P \sqrt p d\ell(p) = k_2 \sqrt P - k_1 \sqrt a - (k_2 - k_1)\sqrt b \\
        &= k_1 \left(\sqrt b - \sqrt a\right) + k_2 \left(\sqrt P - \sqrt b\right).
    \end{align*}
    \item For $P < a$,
    \begin{align*}
        x =  \frac{\ell(P)}{\sqrt P} + \int_P^\infty \frac1{\sqrt p} d\ell(p) = \frac{k_1}{\sqrt a} + \frac{k_2 - k_1}{\sqrt b} - \frac{k_2}{\sqrt c}, \quad
        y = 0.
    \end{align*}
\end{enumerate}
\begin{figure}[ht]
\centering
\includegraphics[width=0.8\linewidth]{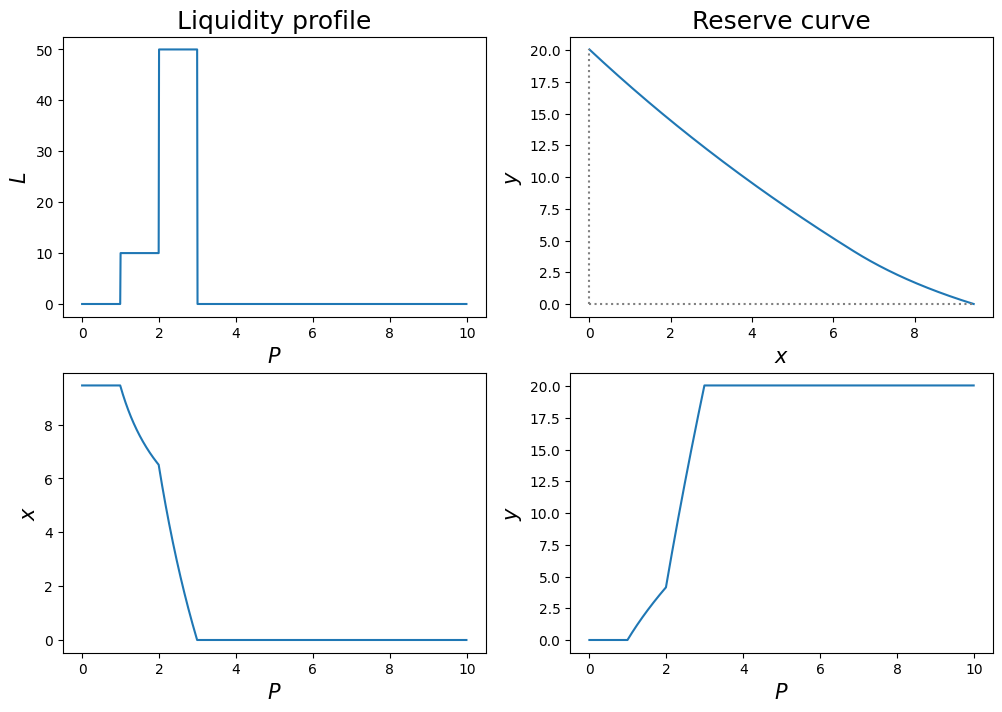}
\caption{Liquidity profile with two concentrated liquidity positions, illustrating the varying liquidity levels across different price ranges.}
\end{figure}
\end{example}
\begin{example}[Continuous Distribution]
Assume the liquidity profile is given by a $\chi^2$ distribution with three degrees of freedom. In this case, the expressions in \eqref{eqn:CLMM_LP_xy} do not have closed-form analytical solutions. To illustrate the behavior of the liquidity profile and the corresponding pool reserves, Figure \ref{fig:lp_chi2} presents numerical results parameterized by pool price $P$.
\begin{figure}[ht]
\centering
\includegraphics[width=0.8\linewidth]{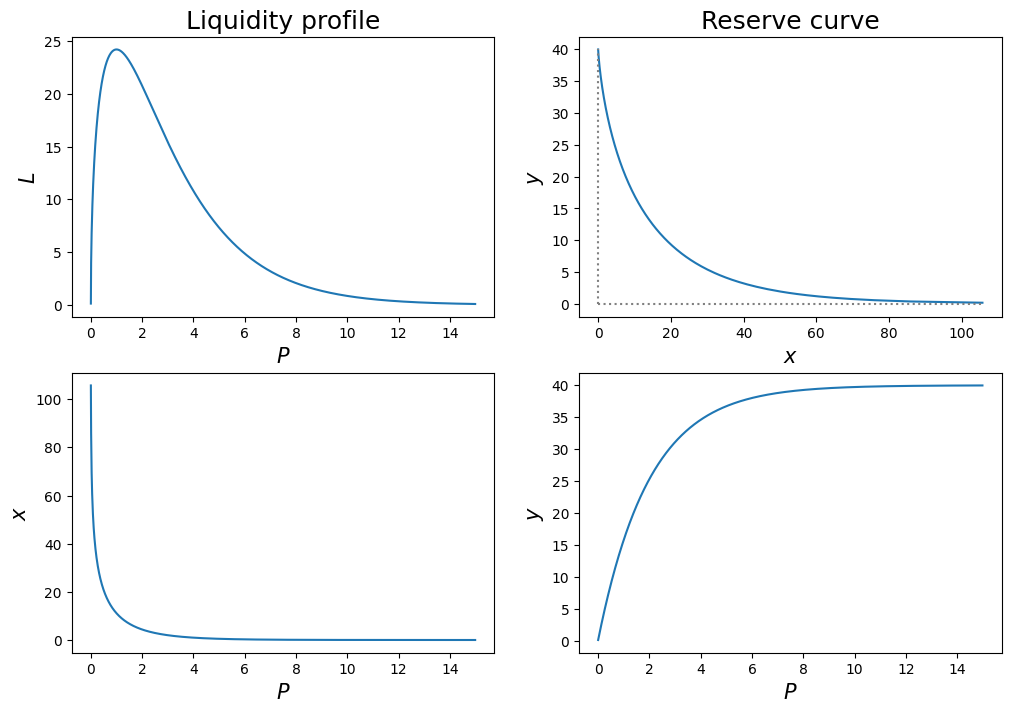}
\caption{Liquidity profile following a $\chi^2$-distribution, demonstrating a continuous and non-uniform liquidity distribution.}
\label{fig:lp_chi2}
\end{figure}
\end{example}

\begin{example}[Real-World Liquidity Profiles]
Figure \ref{fig:empirical-profiles} illustrates empirical liquidity profiles observed on Uniswap V3. The top 3D plot shows the evolution of liquidity over time, with "tick" representing discrete price levels and the vertical axis indicating liquidity depth (on a logarithmic scale). The bottom four plots provide snapshots of liquidity profiles at different points in time, highlighting the varying concentration of liquidity around the spot price. Notably, the liquidity profiles tend to be highly concentrated around the current spot price, reflecting the active management of liquidity positions by liquidity providers.

\begin{figure}
    \centering
    % First row - larger 3D plot
    \begin{subfigure}[b]{0.8\textwidth}
        \centering
        \includegraphics[width=\textwidth]{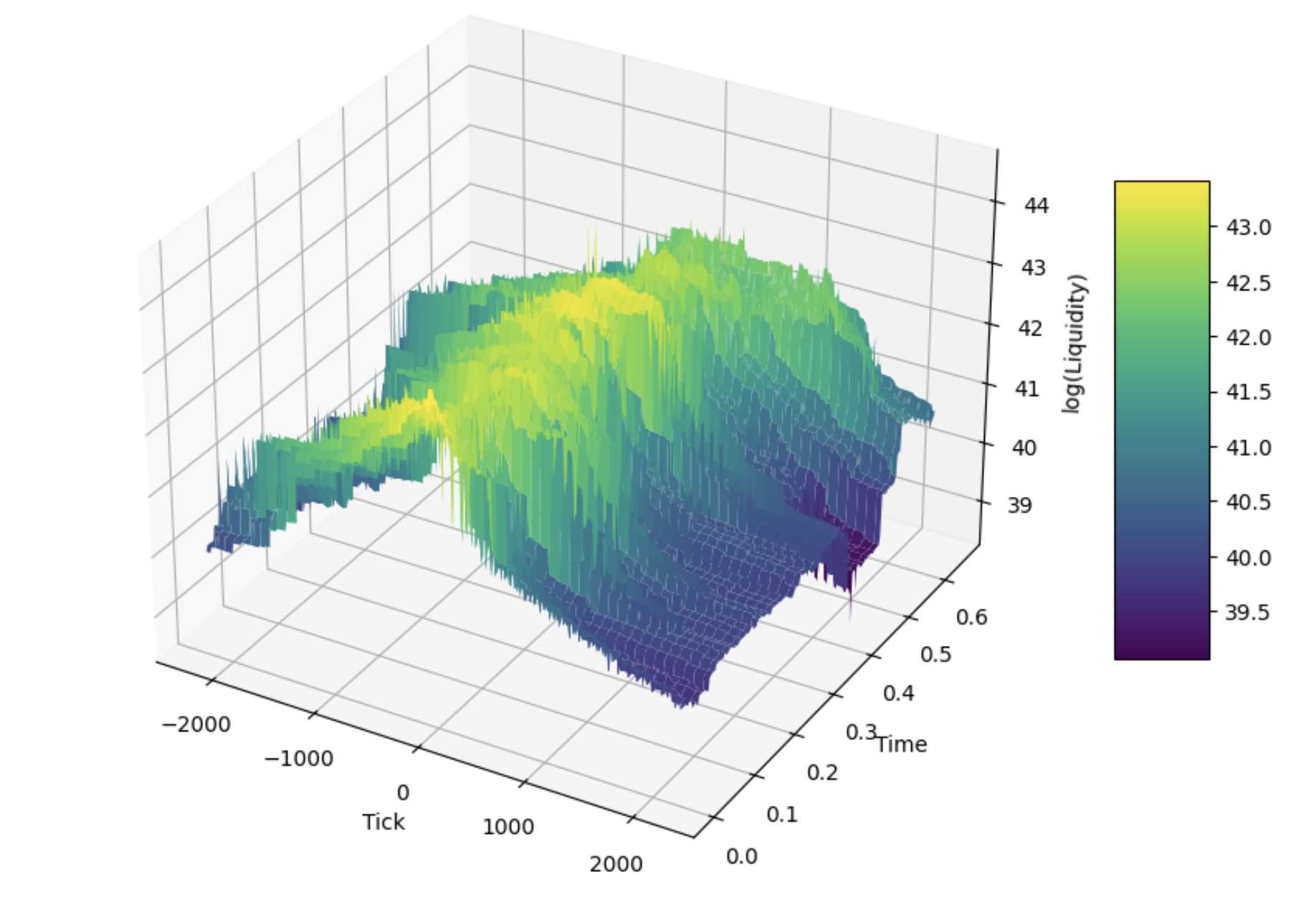}
        \caption{3D Liquidity Evolution}
    \end{subfigure}
    
    \vspace{1em}
    
    % Second row - four smaller plots
    \begin{subfigure}[b]{0.43\textwidth}
        \centering
        \includegraphics[width=\textwidth]{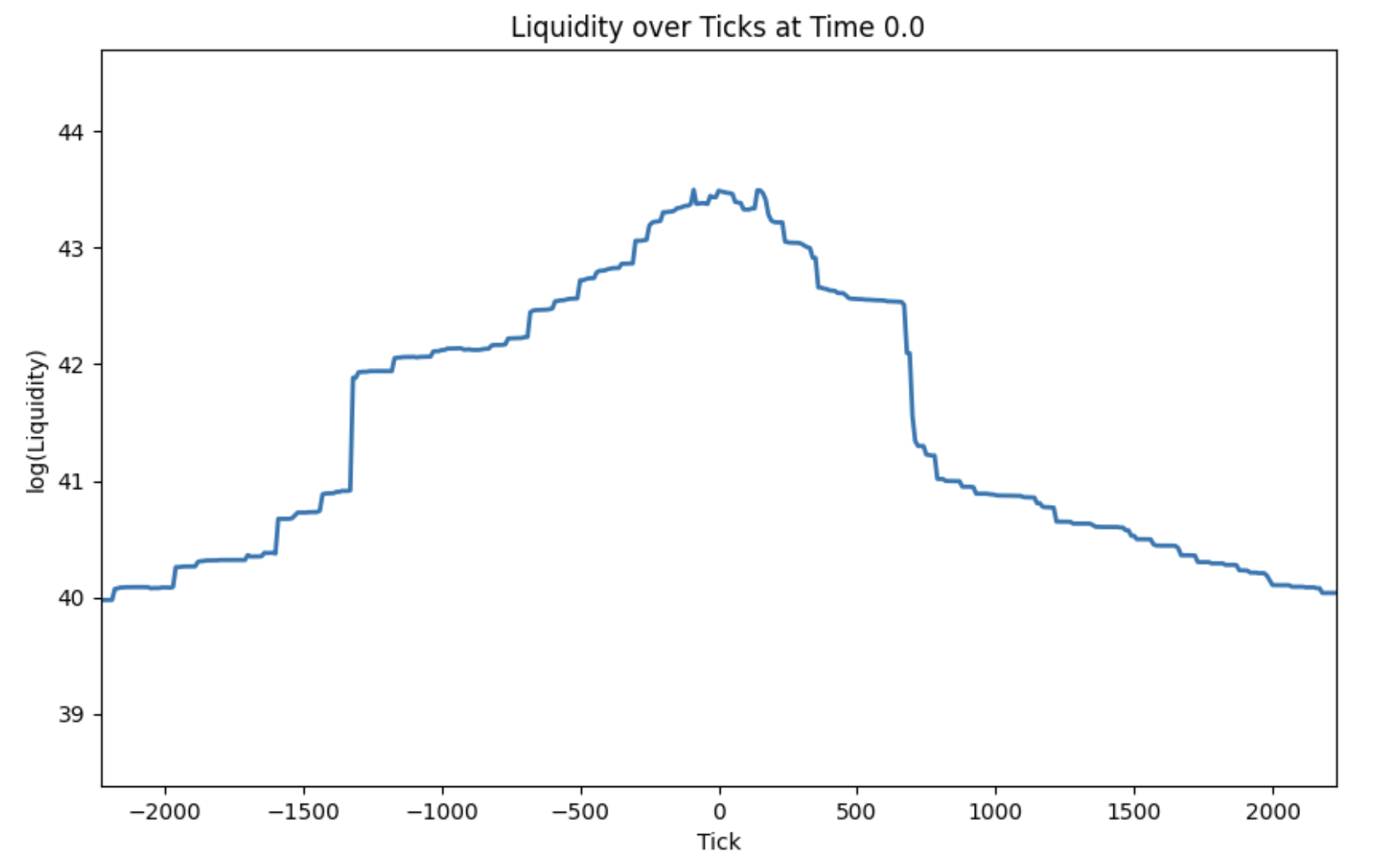}
        \caption{Snapshot 1}
    \end{subfigure}
    \hfill
    \begin{subfigure}[b]{0.43\textwidth}
        \centering
        \includegraphics[width=\textwidth]{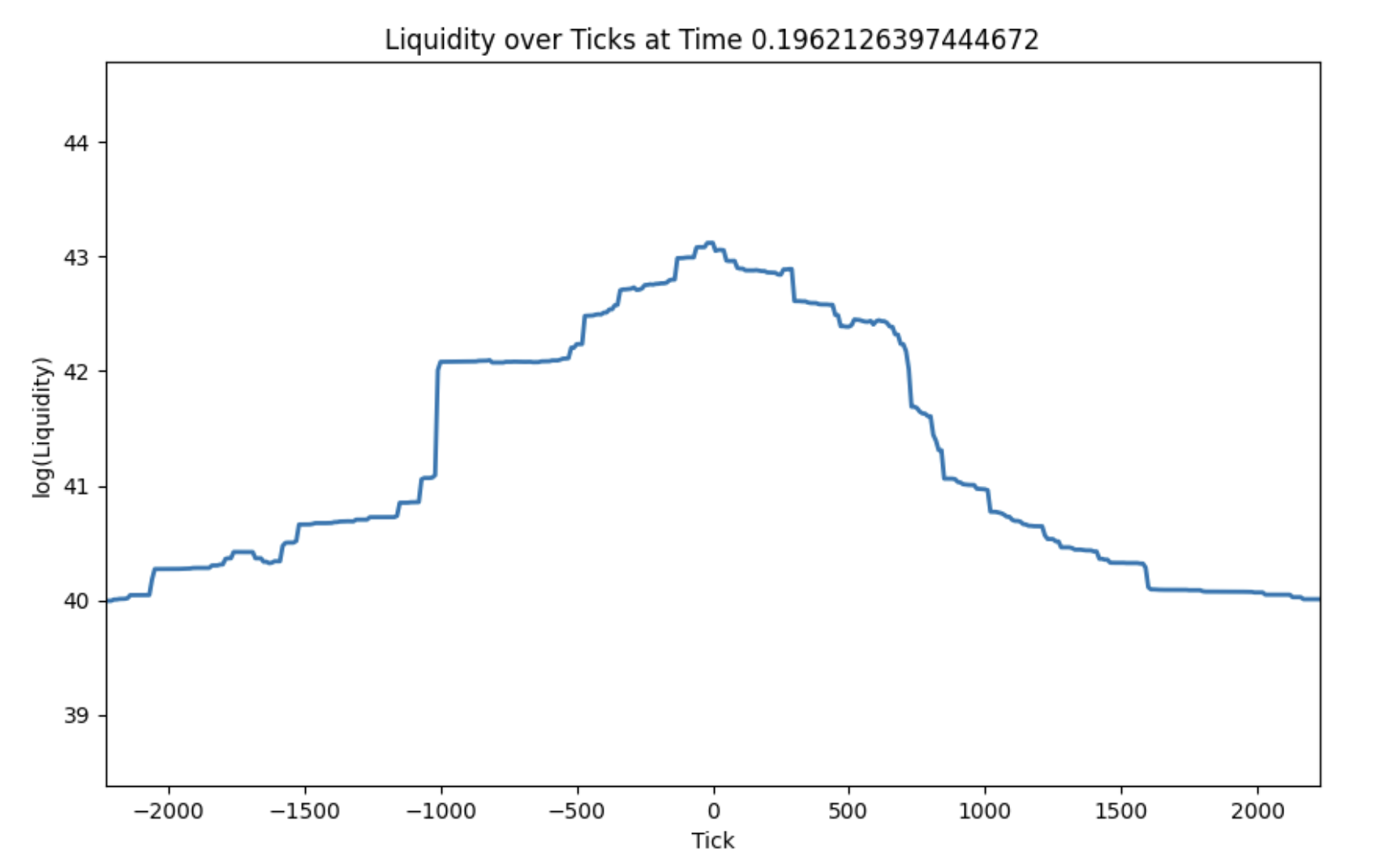}
        \caption{Snapshot 2}
    \end{subfigure}
    \hfill
    \begin{subfigure}[b]{0.43\textwidth}
        \centering
        \includegraphics[width=\textwidth]{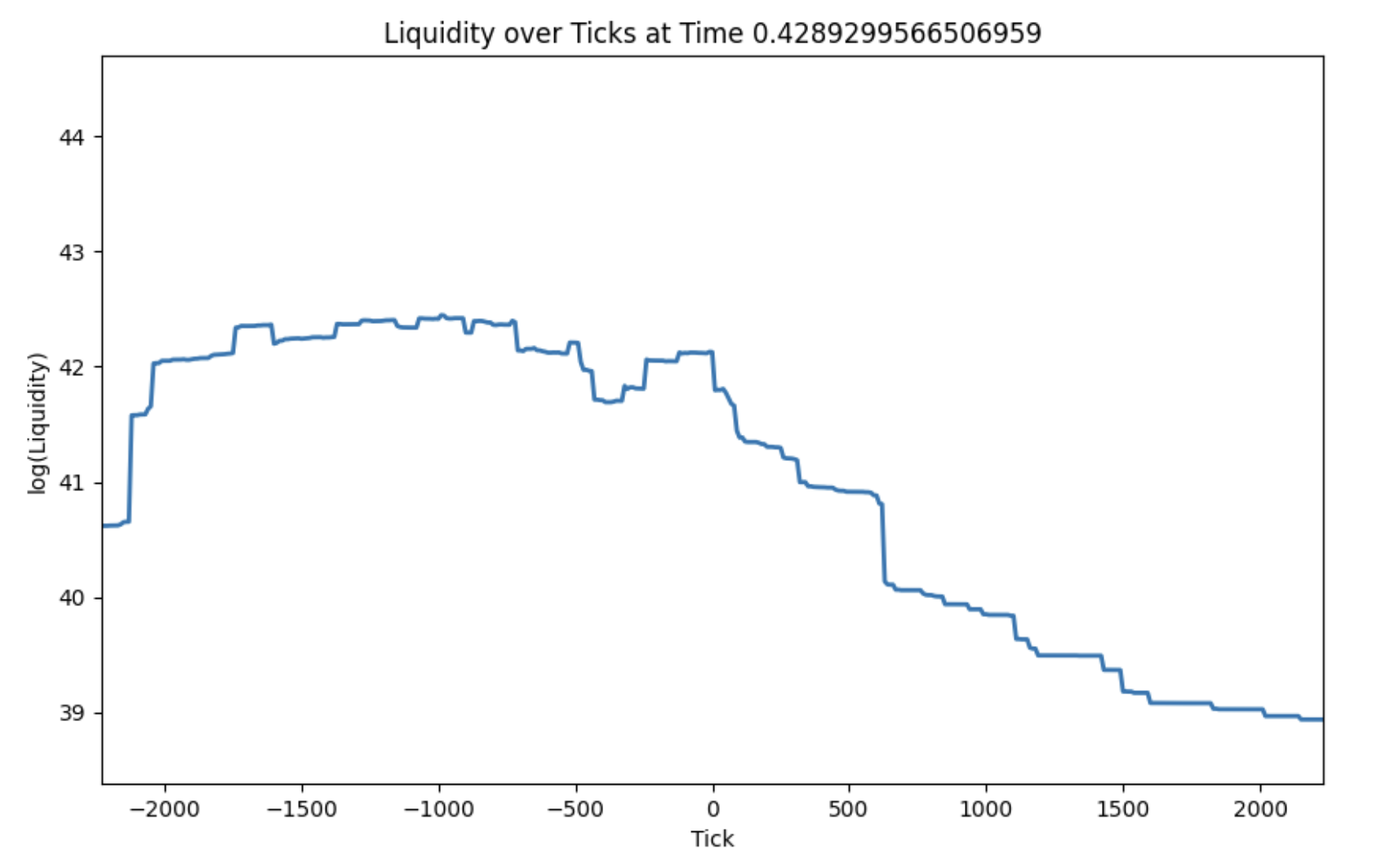}
        \caption{Snapshot 3}
    \end{subfigure}
    \hfill
    \begin{subfigure}[b]{0.43\textwidth}
        \centering
        \includegraphics[width=\textwidth]{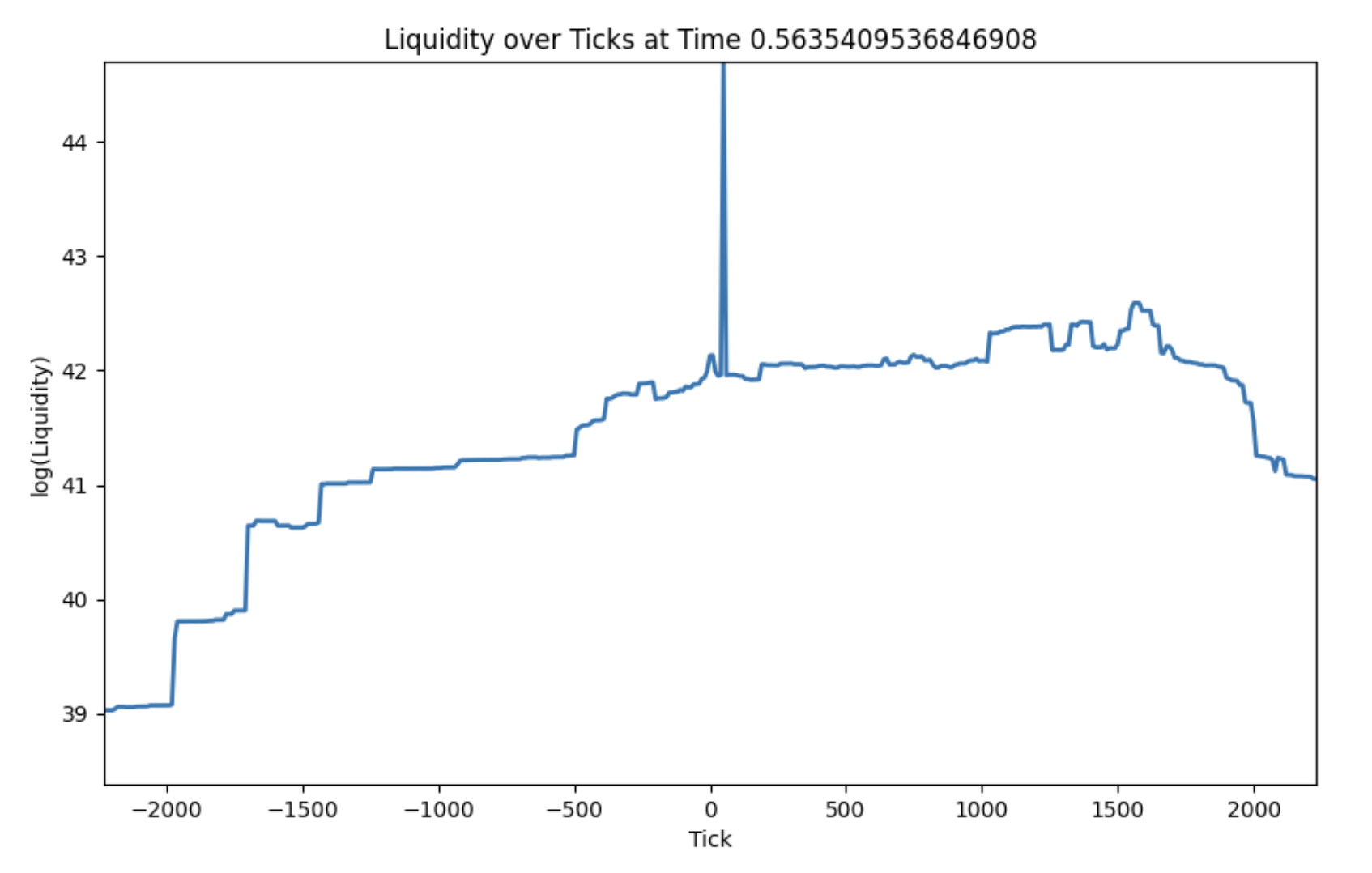}
        \caption{Snapshot 4}
    \end{subfigure}
    
    \caption[size=normalsize]{Empirical liquidity profiles from Uniswap V3. The 3D plot shows liquidity evolution over time and across "ticks" (discrete price levels). The 2D plots display snapshots of liquidity profiles at different times, with liquidity shown on a logarithmic scale. Data courtesy of Wun-Cing Liou and Jimmy Risk.}
    \label{fig:empirical-profiles}
\end{figure}
\end{example}

\subsubsection{IL and LVR for Static Liquidity Profile} \label{section:LVR_CLMM}
We now examine how IL and LVR evolve in the context of a liquidity profile $\ell(P)$. Applying Itô's formula to \eqref{eqn:CLMM_LP_xy}, we obtain the dynamics of the reserve asset $x_t$:
\begin{align*}
dx_t 
&= d\left\{\frac12 \int_{P_t}^\infty \ell(p)p^{-\frac32} dp \right\} \notag \\
&= \frac12 \int_{P_t}^\infty \frac{d\ell}{dt}(p)p^{-\frac32} dp - \frac12 \ell(P_t){P_t}^{-\frac32} dP_t + \frac12 \left\{-\frac12 P_t \ell'(P_t)+ \frac34 \ell(P_t)\right\} P_t^{-\frac52} d\inn{P}_t, 
\end{align*}
and $y_t$:
\begin{align*}
dy_t 
=& d\left\{\frac12 \int_0^{P_t} \ell(p)p^{-\frac12} dp \right\} \notag \\
=& \frac12 \int_0^{P_t} \frac{d\ell}{dt}(p)p^{-\frac12} dp + \frac12 \ell(P_t){P_t}^{-\frac12} dP_t + \frac12 \left\{\frac12 P_t \ell'(P_t) - \frac14 \ell(P_t)\right\} P_t^{-\frac32} d\inn{P}_t.
\end{align*}
Note that $\frac{d\ell}{dt} = 0$ since the liquidity profile is static. Using these expressions, we derive the evolution of the total pool value $V_t = P_t x_t + y_t$:
\begin{align*}
dV_t 
=& d(P_t x_t + y_t) = x_t dP_t + P_t dx_t + d\inn{x, P}_t + dy_t \notag \\
=& x_t dP_t - \frac12 \ell(P_t){P_t}^{-\frac12} dP_t + \frac12 \left\{-\frac12 P_t \ell'(P_t)+ \frac34 \ell(P_t)\right\} P_t^{-\frac32} d\inn{P}_t  \notag \\
&- \frac12 \ell(P_t){P_t}^{-\frac32} d\inn{P}_t + \frac12 \ell(P_t){P_t}^{-\frac12} dP_t + \frac12 \left\{\frac12 P_t \ell'(P_t) - \frac14 \ell(P_t)\right\} P_t^{-\frac32} d\inn{P}_t  \notag \\
=& x_t dP_t - \frac14 \ell(P_t) P_t^{-\frac32} d\inn{P}_t.
\end{align*}
Therefore, the LVR in this scenario is given by
\begin{equation} \label{eqn:CLMM_LRV}
d{\rm LVR}_t = x_t dP_t - dV_t
= \frac14 \ell(P_t) P_t^{-\frac32} d\inn{P}_t.
\end{equation}
Consequently, the impermanent loss ${\rm IL}_t$ can be expressed as
$$
d{\rm IL}_t = (x_0 - x_t)dP_t + d {\rm LVR}_t.
$$

This analysis reveals that, similar to standard CFMMs, the LVR for a liquidity profile is influenced only by the quadratic variation of the pool price $P$ and the liquidity amount at the current price. This highlights the importance of understanding the price dynamics and liquidity distribution when assessing the risks associated with concentrated liquidity provision.

\subsubsection{Time-Dependent Liquidity Profile} \label{section:time-dependent_LP}
When the liquidity profile $\ell_t(\cdot)$ varies over time, the reserve amounts $(x_t, y_t)$ are no longer static. However, the fundamental relationships between these quantities still hold at each point in time. Specifically, for a given price $P$ and time $t$, the reserve amounts are given by
\begin{equation} \label{eqn:CLMM_dynamic_xy}
x_t(P) = \frac12 \int_P^\infty \ell_t(p)p^{-\frac32} dp, \quad
y_t(P) = \frac12 \int_0^P \ell_t(p)p^{-\frac12} dp.  
\end{equation}
The dynamics of the reserve amounts follow from the dynamics of the liquidity profile:
\begin{align*}
dx_t(P) &= \frac12 \int_P^\infty d\ell_t(p)p^{-\frac32} dp, \\
dy_t(P) &= \frac12 \int_0^P d\ell_t(p)p^{-\frac12} dp.
\end{align*}
These equations link the evolution of the liquidity profile to changes in the reserve amounts, capturing the dynamic interplay between liquidity provision and asset holdings.

The instantaneous price $P_t$ continues to be defined as the negative slope of the tangent to the reserve curve at time $t$:
$$
m_t(P) := \frac{\partial y_t}{\partial x_t} = -P.
$$
The dynamics of the instantaneous price are given by
$$
dm_t(P) = d\left(\frac{\partial y_t}{\partial x_t}\right) = d\left(-P\right) = 0.
$$
This equation highlights a crucial property: even with a time-varying liquidity profile, the instantaneous price at a given price level $P$ remains independent of time.

\begin{remark} 
The evolution of $\ell_t$ can be modeled using various stochastic processes, such as Gaussian processes or stochastic partial differential equations (SPDEs), to capture the dynamic nature of liquidity provision. For instance, a stochastic heat equation of the form
$$
d\ell_t = \mathcal{A}_t \ell_t dt + \sigma \ell_t dW_t
$$
could be employed, where $\mathcal{A}_t$ is a second-order parabolic or elliptic differential operator, and $W_t$ is standard Brownian motion. This allows for a flexible and nuanced representation of how liquidity profiles change over time.
\end{remark}

\begin{remark}
While the framework in Section \ref{section:LVR_CLMM} can accommodate time-dependent liquidity profiles, we focus on the time-independent case for the following reasons:
\begin{itemize}
    \item Active liquidity management by LPs introduces non-self-financing aspects, as LPs may need to add or remove assets from the pool to maintain their desired liquidity positions. To address this within the context of IL and LVR analysis, the terms $\frac12 P_t \int_{P_t}^\infty d\ell(p)p^{-\frac32} dp + \frac12 \int_0^{P_t} d\ell_t(p)p^{-\frac12} dp$, which arise from liquidity adjustments, need to be incorporated into the definition of the hedging/replication strategy.
    \item Price changes $dP_t$ and liquidity adjustments $d\ell_t$ occur as distinct, ordered events on the blockchain. This implies that $d\inn{P, \ell(P)}_t = 0$ simplifies the expression for $dV_t$ by eliminating several terms.
    \item Consequently, the core component of $d {\rm LVR}_t$ that is both unhedgeable and directly attributable to arbitrage remains $\frac14 \ell(P_t) P_t^{-\frac32} d\inn{P}_t$, consistent with the static liquidity profile case in Equation \eqref{eqn:CLMM_LRV}. This emphasizes that the primary driver of LVR in CLMMs is the quadratic variation of the price, even in the presence of time-dependent liquidity profiles.
\end{itemize}
\end{remark}

\section{Price Process Models and Arbitrage} \label{section:Arbitrage}

\subsection{Continuous Trading Framework} \label{section:conti_trading}
To delve deeper into the dynamics of CLMMs under continuous trading, we construct a mathematical framework that seamlessly integrates order flows, price impacts, fees, and liquidity constraints. This framework serves as the cornerstone for comprehending arbitrage opportunities and constructing optimal trading strategies within the CLMM environment.

\subsubsection{Order Flow Dynamics}
Given a filtered probability space $(\Omega, \mathcal{F}, \{\mathcal{F}_t\}, \mathbb{P})$ that adheres to the usual conditions, the trading dynamics are characterized by two non-negative, adapted processes:
\begin{itemize}
    \item $u^a_t$: The rate at which the CLMM buys asset $X$, capturing the influx of buy orders for the asset.
    \item $u^b_t$: The rate at which the CLMM sells asset $X$, representing the flow of sell orders for the asset.
\end{itemize}
These processes encapsulate the continuous trading activity within the CLMM to model the evolution of the pool's state.

%State Evolution Equations
The evolution of the pool state $(x_t, y_t, P_t)$ follows directly from differentiating Equations \eqref{eqn:CLMM_dynamic_xy}:
\begin{align} \label{eqn:CLMM_orderflow_xy}
dx_t &= \frac12 \int_{P_t}^\infty d\ell_t(p) p^{-\frac32} dp + (u^a_t - u^b_t) dt, \notag\\
dy_t &= \frac12 \int_0^{P_t} d\ell_t(p) p^{-\frac12} dp - P_t (u^a_t - u^b_t) dt, \\
dP_t &= \frac{-2 P_t^{\frac{3}{2}}}{\ell_t(P_t)} (u^a_t - u^b_t) dt. \notag
\end{align}
These equations depict how the pool's reserves and price evolve in response to trading activity, capturing the interplay between order flows, liquidity profiles, and price adjustments.

\subsubsection{Fee Considerations}
%Fee Accumulation Dynamics
For a fee tier $(1-\gamma)$, fees accumulate according to
\begin{equation} \label{eqn:CLMM_dynamic_fee}
    dF^x_t = \frac{1-\gamma}{\gamma} u^a_t dt, \quad
    dF^y_t = \frac{1-\gamma}{\gamma} P_t u^b_t dt
\end{equation}
where $F^x_t$ (resp. $F^y_t$) represents the cumulative amount of fees accrued in asset $X$ (resp. $Y$).

\subsection{Arbitrage Models}
This section delves into arbitrage models within the established CLMM continuous trading framework. To ensure analytical tractability and accommodate trading fees \eqref{eqn:CLMM_dynamic_fee}, we assume that the price process, $P_t$, has finite variation. This assumption excludes stochastic differential equations (SDEs) with a diffusion term for the price process.

\subsubsection{Mispricing Process}
We begin by introducing an exogenous price process, $S_t$, for asset $X$, which follows geometric Brownian motion (GBM):
$$
d \ln S_t = \mu dt + \sigma dW_t.
$$
where $\mu$ represents the drift and $\sigma$ the volatility parameter. This exogenous price reflects the asset's true underlying value, which may deviate from the CLMM price due to market inefficiencies or temporary imbalances.

%Mispricing Definition
To quantify this discrepancy, we define the mispricing process, $Z_t$, as the logarithmic difference between the CLMM price, $P_t$, and the fair price, $S_t$:
$$
Z_t = \ln S_t - \ln P_t.
$$
Under the finite variation assumption for $P_t$, $Z_t$ must have a diffusion coefficient of  $\sigma$. Consequently, it follows the SDE:
\begin{equation} \label{eqn:SDE_mispricing}
dZ_t = (\mu - u_t) dt + \sigma dW_t,
\end{equation}
where $u_t$ controls the log-price movement rate:
$$
d \ln P_t = \frac{-2 P_t^{\frac{1}{2}}}{\ell_t(P_t)} (u^a_t - u^b_t) dt := u_t dt.
$$

\begin{remark} \
\begin{enumerate}
    \item The exogenous price, $S_t$, typically refers to the price observed on centralized exchanges with high trading volumes, providing a reliable benchmark for the asset's fair value.
    \item The logarithmic scale, $\ln P_t$, for price and liquidity provision aligns with the design of Uniswap V3.
    \item This modeling approach draws an analogy with LOBs \cite{Cont2010StochasticModel, Cont2013PriceDynamics, Cont2021StochasticPartial, ContDegondXuan2023}, where $S_t$ represents the mid-price and the bid-ask spread is determined by the fee parameter $\gamma$. However, unlike traditional LOBs, the bid-ask spread in a CLMM is fixed, and liquidity provision does not directly influence the price.
\end{enumerate}
\end{remark}

\subsubsection{Myopic Arbitrage}
This model explores a scenario where arbitrageurs exclusively drive the CLMM price. We assume a group of arbitrageurs continuously monitors the CLMM and the fair market prices, executing trades whenever profitable arbitrage opportunities arise, specifically when $S_t - \gamma^{-1}P_t > 0$ (buy opportunity) or $\gamma P_t - S_t > 0$ (sell opportunity).  

We introduce the following additional assumptions:
\begin{assumption} \label{Asm:myoptic_arbitrage} \
\begin{itemize}
    \item The reference market exhibits infinite liquidity, implying arbitrage trading does not affect the fair price, $S_t$.
    \item Arbitrageurs are \textbf{myopic}, meaning they execute trades immediately to capitalize on any observed price discrepancies without considering future price movements or potential competition from other arbitrageurs.
\end{itemize}
\end{assumption}
These assumptions lead to the price dynamics described in \cite{fukasawa2023modelfree, najnudel2024arbitrage, lee2024growth}:

%Arbitrage Dynamics
\begin{proposition}[Myopic Arbitrage Dynamics] \label{Prop:Myopic_dynamics}
Given a continuous market price $S_t$ satisfying the initial condition $\gamma P_0 \le S_0 \le \gamma^{-1} P_0$, and under Assumptions \ref{Asm:myoptic_arbitrage}, the following hold:
\begin{enumerate}
\item[a)] The mispricing process $Z_t$ can be decomposed as $Z_t = \ln S_t - \ln P_0 + L_t - U_t$ and remains within the range $[\ln \gamma, -\ln \gamma]$ for all $t \geq 0$.
\item[b)] $L_t$ and $U_t$ are both non-decreasing and continuous, with initial values $L_0 = U_0 = 0$.
\item[c)] $L_t$  increases only when $Z_t = - \ln \gamma$, and $U_t$ increases only when $Z_t = \ln \gamma$.
\end{enumerate}
Moreover, $L_t$ and $U_t$ satisfy
\begin{align*}
L_t &= \sup_{0 \le s \le t} \left( - \ln(\gamma P_0) + \ln S_s - U_s \right)^-, \\
U_t &= \sup_{0 \le s \le t} \left( \ln(\gamma^{-1} P_0) - \ln S_s - L_s \right)^-.
\end{align*}
\end{proposition}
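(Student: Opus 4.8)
The plan is to recognise Proposition~\ref{Prop:Myopic_dynamics} as the two–sided Skorokhod reflection problem for the continuous driving path $\widetilde Z_t := \ln S_t - \ln P_0$ on the interval $[\ln\gamma,\,-\ln\gamma]$, and then to extract every assertion from the well-posedness of that problem together with the explicit form of the Skorokhod map. Note at the outset that $\ln\gamma < 0 < -\ln\gamma$ since $\gamma\in(0,1)$, so the band is non-degenerate, and that $\widetilde Z$ starts inside it: $\widetilde Z_0 = \ln S_0 - \ln P_0 \in [\ln\gamma,-\ln\gamma]$ is exactly the hypothesis $\gamma P_0 \le S_0 \le \gamma^{-1}P_0$.

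First I would turn the myopic–arbitrage mechanism into reflection. Under Assumptions~\ref{Asm:myoptic_arbitrage} the CLMM price moves only through arbitrage trades, and a profitable trade exists precisely when $S_t > \gamma^{-1}P_t$, equivalently $Z_t > -\ln\gamma$ --- in which case arbitrageurs buy $X$ from the pool and raise $\ln P_t$ --- or when $\gamma P_t > S_t$, equivalently $Z_t < \ln\gamma$ --- in which case they sell $X$ into the pool and lower $\ln P_t$; competition among myopic arbitrageurs means just enough volume is traded to eliminate the discrepancy and no more. To make this precise, set $A_t := \ln P_0 - \ln P_t$; since no arbitrage gap is present at $t=0$ and $S$ is continuous, $P$ stays continuous, and because the dynamics push $\ln P_t$ in one direction at a time, $A$ is continuous and of finite variation and admits a continuous Jordan decomposition $A_t = L_t - U_t$ with $L,U$ non-decreasing, continuous, $L_0=U_0=0$, $L$ increasing only where $P$ decreases and $U$ only where $P$ increases. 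Then $Z_t = \widetilde Z_t + A_t = \ln S_t - \ln P_0 + L_t - U_t$, which is assertion (a) together with (b); the arbitrage rule says $P$ decreases (so $dL>0$) only while $Z_t=\ln\gamma$ and $P$ increases (so $dU>0$) only while $Z_t=-\ln\gamma$, and that $Z_t$ never exits $[\ln\gamma,-\ln\gamma]$, which gives (c) and the containment in (a).

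It then remains to derive the explicit formulas. By the previous step $(Z,L,U)$ solves the Skorokhod problem on $[\ln\gamma,-\ln\gamma]$ with input $\widetilde Z$; invoking the classical well-posedness of this problem for continuous inputs (the Skorokhod map on a bounded interval is well defined and Lipschitz; see the explicit construction of Kruk, Lehoczky, Ramanan and Shreve, or the treatments in \cite{fukasawa2023modelfree, najnudel2024arbitrage, lee2024growth}), $(L,U)$ is the unique pair with these properties. Rewriting the claimed identities in the $\widetilde Z$–variable via $\ln(\gamma P_0)=\ln\gamma+\ln P_0$, $\ln(\gamma^{-1}P_0)=-\ln\gamma+\ln P_0$ and the convention $(a)^-=\max(-a,0)$, they read $L_t = \sup_{s\le t}\bigl(\ln\gamma - \widetilde Z_s + U_s\bigr)^{+}$ and $U_t = \sup_{s\le t}\bigl(\widetilde Z_s + L_s + \ln\gamma\bigr)^{+}$, i.e.\ the one-sided reflection maps at the lower barrier $\ln\gamma$ and the upper barrier $-\ln\gamma$, each evaluated with the other's pushing process held fixed. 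I would check directly that this pair satisfies the defining properties --- $Z=\widetilde Z+L-U$ stays in the band, and a running supremum increases only when the reflected path is at the corresponding barrier, so $L$ (resp.\ $U$) is flat off $\{Z=\ln\gamma\}$ (resp.\ $\{Z=-\ln\gamma\}$) --- and that it is the minimal such pair; uniqueness of the Skorokhod solution then identifies it with the $(L,U)$ of the previous step.

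The step I expect to be the main obstacle is the first one: converting the informal description of myopic, competitive arbitrage into the precise minimality statement that $L$ and $U$ are the smallest non-decreasing controls keeping $Z$ in $[\ln\gamma,-\ln\gamma]$. One must rule out under-reaction (an arbitrage gap persisting over a time set of positive measure, contradicting immediate execution) and over-reaction (the price pushed strictly into the interior while a control is active, contradicting the absence of further profit), and argue that together these force the complementarity in (c); this is an equilibrium argument rather than a computation. Once that is in place the rest is standard reflection theory, the only mild subtlety being that on a finite interval each barrier's local time appears inside the other's supremum, so the stated formulas form a genuine coupled fixed point, resolved either by citing the non-coupled closed form of the Skorokhod map on $[0,a]$ or by a short monotone-iteration argument.
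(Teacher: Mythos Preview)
The paper does not give a self-contained proof of this proposition; it simply records it as a consequence of \cite{fukasawa2023modelfree, najnudel2024arbitrage, lee2024growth}. Your proposal supplies exactly the argument those references use: recognise the myopic-arbitrage mechanism as the two-sided Skorokhod reflection of the continuous driving path $\widetilde Z_t=\ln S_t-\ln P_0$ on $[\ln\gamma,-\ln\gamma]$, convert ``immediate execution with no over-reaction'' into the minimal-push/complementarity conditions, and then read off the coupled running-sup formulas from the explicit Skorokhod map. That is the intended route, and your outline of the equilibrium step (ruling out under- and over-reaction) is the right place to put the work.

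One concrete point to address. Your analysis yields $dL>0$ only at $Z=\ln\gamma$ and $dU>0$ only at $Z=-\ln\gamma$, the opposite of what item (c) literally says. Your assignment is the correct one: it matches the displayed formulas for $L_t,U_t$ (the $L$-formula is the lower-barrier map, since with $(a)^-=\max(-a,0)$ it rewrites as $L_t=\sup_{s\le t}(\ln\gamma-\widetilde Z_s+U_s)^+$), and it matches the subsequent corollaries (there $dx_t\propto dL_t-dU_t$ and $x_t$ is said to increase at $Z_t=\ln\gamma$). So item (c) as printed has the barriers swapped; flag this explicitly rather than claiming your argument ``gives (c)''.
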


%Inventory and Fee Dynamics
The following corollaries follow directly from Proposition \ref{Prop:Myopic_dynamics} and Section \ref{section:conti_trading}:

\begin{corollary}[Inventory Dynamics in Myopic Arbitrage]
Under the same assumptions as in Proposition \ref{Prop:Myopic_dynamics}, the following hold:
\begin{enumerate}
    \item[(a)] $x_t$ and $y_t$ are predictable processes.
    \item[(b)] $x_t$ increases only when $Z_t = \ln \gamma$ and decreases only when $Z_t = - \ln \gamma$. Similarly, $y_t$ increases only when $Z_t = - \ln \gamma$ and decreases only when $Z_t = \ln \gamma$.
    \item[(c)] $x_t$ and $y_t$ are continuous and of bounded variation on any bounded interval in $[0, \infty)$.
    \item[(d)] The arbitrage inventory process can be characterized by
    $$
    d x_t
    = \frac{\ell(P_t) P_t^{-\frac12}}{2} \{ dL_t -  dU_t \}, \quad
    d y_t
    = \frac{\ell(P_t) P_t^{-\frac12}}{2} \{ dU_t -  dL_t \}.
    $$
\end{enumerate}
\end{corollary}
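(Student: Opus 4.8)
The plan is to transport the structural properties of the CLMM price $P_t$ --- which Proposition~\ref{Prop:Myopic_dynamics} realizes through the Skorokhod-type representation $\ln P_t=\ln S_t-Z_t=\ln P_0-L_t+U_t$ with $L_t,U_t$ continuous, non-decreasing, and active only on the two mispricing boundaries --- through the time-independent reserve maps
$$
x(P)=\tfrac12\int_P^\infty \ell(p)\,p^{-3/2}\,dp,\qquad y(P)=\tfrac12\int_0^P \ell(p)\,p^{-1/2}\,dp
$$
from \eqref{eqn:CLMM_LP_xy} (the myopic model keeps the profile $\ell$ fixed, as reflected in the statement's use of $\ell(P_t)$), so that pathwise $x_t=x(P_t)$ and $y_t=y(P_t)$. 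The only properties of these maps I will need are: they are finite on $(0,\infty)$; $x(\cdot)$ is strictly decreasing and $y(\cdot)$ strictly increasing (as already recorded via $dy/dx=-P$, with $x'(P)=-\tfrac12\ell(P)P^{-3/2}<0$ and $y'(P)=\tfrac12\ell(P)P^{-1/2}>0$ wherever $\ell$ is continuous); and $dy_t=-P_t\,dx_t$, the pathwise self-financing relation read off from \eqref{eqn:CLMM_orderflow_xy} once $d\ell_t\equiv0$.

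Parts (a) and (c) are soft. For (a): Proposition~\ref{Prop:Myopic_dynamics}(b) gives that $L_t,U_t$ are continuous, and the running-supremum representations therein make them adapted; hence $\ln P_t$, and therefore $P_t$, is continuous and adapted, thus predictable, and composition with the Borel maps $x(\cdot),y(\cdot)$ preserves predictability, so $x_t,y_t$ are predictable. For (c): by Proposition~\ref{Prop:Myopic_dynamics}(a), $Z_t$ stays in the compact interval $[\ln\gamma,-\ln\gamma]$, while $S_t$ is continuous, hence on each $[0,T]$ bounded above and bounded away from $0$; so $P_t=S_te^{-Z_t}$ takes values in a random compact subinterval of $(0,\infty)$. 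Since $L_t,U_t$ are non-decreasing and continuous, $\ln P_t$ is continuous with total variation at most $L_T+U_T<\infty$ on $[0,T]$, and since $x(\cdot)$ is monotone one has the crude bound $\mathrm{Var}_{[0,T]}(x\circ P)\le x(\min_{s\le T}P_s)-x(\max_{s\le T}P_s)<\infty$, with the analogous bound for $y$; hence $x_t,y_t$ are continuous and of bounded variation on bounded intervals (continuity alone is also immediate, since $x(\cdot),y(\cdot)$ are continuous and $P_t$ is continuous).

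Part (b) is the crux, and follows by combining two ingredients. First, at $Z_t=\ln\gamma$ one has $P_t=S_te^{-Z_t}=S_t/\gamma>S_t$, so the CLMM is overpriced at the fee threshold and arbitrageurs can only sell asset $X$ into the pool, driving $x_t$ up, $y_t$ down, and $P_t$ down; symmetrically, at $Z_t=-\ln\gamma$ the pool is underpriced and they can only withdraw asset $X$, so $x_t$ decreases, $y_t$ increases, and $P_t$ increases. Equivalently, from $\ln P_t=\ln P_0-L_t+U_t$ the price $P_t$ can move up only where one of $L_t,U_t$ increases and down only where the other does, and Proposition~\ref{Prop:Myopic_dynamics}(c) confines these increase times to the two boundary sets $\{Z_t=\ln\gamma\}$ and $\{Z_t=-\ln\gamma\}$; composing this monotonicity of $P_t$ with the strictly decreasing $x(\cdot)$ and strictly increasing $y(\cdot)$ gives (b). Finally, for (d) I start from \eqref{eqn:CLMM_orderflow_xy}, which with a static profile reduces to $dx_t=(u^a_t-u^b_t)\,dt$, $dy_t=-P_t(u^a_t-u^b_t)\,dt$, and $d\ln P_t=-\tfrac{2P_t^{1/2}}{\ell(P_t)}(u^a_t-u^b_t)\,dt$; eliminating the order-flow term gives $(u^a_t-u^b_t)\,dt=-\tfrac12\ell(P_t)P_t^{-1/2}\,d\ln P_t$, and substituting $d\ln P_t=dU_t-dL_t$ produces the asserted expression for $dx_t$, after which $dy_t=-P_t\,dx_t$ yields the companion $y$-dynamics. (Equivalently, differentiate $x_t=x(P_t)$ directly using $x'$ and the relation $dP_t=P_t\,d\ln P_t$, which holds without an It\^o correction because $P_t$ is of finite variation.)

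The step I expect to be the main obstacle is the bounded-variation claim in (c) at the stated level of generality, where $\ell$ is only the cumulative distribution function of a $\sigma$-finite signed measure and may be discontinuous and locally unbounded, so $x(\cdot)$ need not be Lipschitz; there I plan to rely on the monotonicity bound $\mathrm{Var}_{[0,T]}(x\circ P)\le x(\min_{s\le T}P_s)-x(\max_{s\le T}P_s)$ rather than on any regularity of $\ell$, and on local boundedness of the continuous process $S$ (away from $0$ and $\infty$) to confine $P_t$ to a compact subinterval of $(0,\infty)$ on every finite horizon. Everything else is routine bookkeeping once Proposition~\ref{Prop:Myopic_dynamics} and the reserve representation \eqref{eqn:CLMM_LP_xy} are in hand.
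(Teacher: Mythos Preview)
The paper does not prove this corollary in detail; it simply asserts that the result ``follows directly'' from Proposition~\ref{Prop:Myopic_dynamics} and the continuous trading framework of Section~\ref{section:conti_trading}. Your proof makes this explicit along exactly the route the paper intends, and parts (a), (b), and (d) are handled correctly.

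Your argument for (c), however, contains a genuine error. The inequality
\[
\mathrm{Var}_{[0,T]}(x\circ P)\ \le\ x\Bigl(\min_{s\le T}P_s\Bigr)-x\Bigl(\max_{s\le T}P_s\Bigr)
\]
is false: monotonicity of $x(\cdot)$ bounds the \emph{range} of $x\circ P$ on $[0,T]$, not its total variation. If $P_t$ oscillates between two values $a<b$ many times --- and nothing in Proposition~\ref{Prop:Myopic_dynamics} prevents this, since $Z_t$ is driven by Brownian motion and can visit both reflecting boundaries repeatedly on any finite interval --- then $x(P_t)$ oscillates between $x(b)$ and $x(a)$ just as many times, and its variation accumulates with each excursion while your right-hand side stays fixed at $x(a)-x(b)$. (Composition of a monotone function with a bounded-variation function is not, in general, of bounded variation.) The correct argument is the one you explicitly set aside: establish (d) first, so that $dx_t=\tfrac12\ell(P_t)P_t^{-1/2}(dL_t-dU_t)$ pathwise, and then bound
\[
\mathrm{Var}_{[0,T]}(x)\ \le\ \sup_{s\le T}\tfrac12\,\ell(P_s)\,P_s^{-1/2}\cdot\bigl(L_T+U_T\bigr).
\]
This does require $\ell$ to be locally bounded on the compact range of $P$ over $[0,T]$. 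Your concern that this might fail at full $\sigma$-finite generality is legitimate, but there is no repair via monotonicity alone, and the paper's formulation and examples implicitly take $\ell$ to be locally bounded.
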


\begin{corollary}[Trading Fee Dynamics in Myopic Arbitrage]
Under the same assumptions as in Proposition \ref{Prop:Myopic_dynamics}, the following hold:
\begin{enumerate}
    \item[(a)] $F^x_t$ and $F^y_t$ are predictable processes.
    \item[(b)] $F^x_t$ increases only when $Z_t = \ln \gamma$ and $F^y_t$ increases only when $Z_t = - \ln \gamma$.
    \item[(c)] $F^x_t$ and $F^y_t$ are continuous and of bounded variation on any bounded interval in $[0, \infty)$.
    \item[(d)] The trading fee process can be characterized by
    $$
    d F^x_t
    = \frac{1-\gamma}{\gamma} \frac{\ell(P_t) P_t^{-\frac12}}{2} dL_t, \quad
    d F^y_t
    = \frac{1-\gamma}{\gamma} \frac{\ell(P_t) P_t^{\frac12}}{2} dU_t.
    $$
\end{enumerate}
\end{corollary}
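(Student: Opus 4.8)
The plan is to establish the explicit formula (d) first, and then read off (a)--(c) from it together with Proposition \ref{Prop:Myopic_dynamics}. Everything starts from the fee‑accrual rules \eqref{eqn:CLMM_dynamic_fee}, namely $dF^x_t=\tfrac{1-\gamma}{\gamma}\,u^a_t\,dt$ and $dF^y_t=\tfrac{1-\gamma}{\gamma}\,P_t u^b_t\,dt$, so the whole task reduces to identifying the cumulative buy‑ and sell‑volume measures $u^a_t\,dt$ and $u^b_t\,dt$ in terms of the reflecting processes $L_t$ and $U_t$.

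First I would recover the net order flow. From the price equation in \eqref{eqn:CLMM_orderflow_xy}, $d\ln P_t=-\tfrac{2P_t^{1/2}}{\ell_t(P_t)}(u^a_t-u^b_t)\,dt$, while Proposition \ref{Prop:Myopic_dynamics}(a) gives $\ln P_t=\ln P_0-L_t+U_t$, hence $d\ln P_t=dU_t-dL_t$; equating the two yields
$$
(u^a_t-u^b_t)\,dt=\frac{\ell(P_t)P_t^{-1/2}}{2}\bigl(dL_t-dU_t\bigr),
$$
which for a static profile is exactly the $dx_t$ identity of the preceding Inventory Dynamics corollary. The key step is then to \emph{separate} $u^a$ from $u^b$. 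Regard $u^a_t\,dt$, $u^b_t\,dt$, $\tfrac{\ell(P_t)P_t^{-1/2}}{2}\,dL_t$ and $\tfrac{\ell(P_t)P_t^{-1/2}}{2}\,dU_t$ as non‑negative measures on $[0,\infty)$ (using positivity of $\ell$ along the realized price path). Under Assumptions \ref{Asm:myoptic_arbitrage} the arbitrageurs never buy and sell at the same instant, so $u^a_t\,dt$ and $u^b_t\,dt$ are mutually singular; by Proposition \ref{Prop:Myopic_dynamics}(c), $dL_t$ and $dU_t$ are carried by the disjoint sets $\{Z_t=-\ln\gamma\}$ and $\{Z_t=\ln\gamma\}$ and are likewise mutually singular. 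Since the displayed identity exhibits the signed measure $u^a_t\,dt-u^b_t\,dt$ as $\tfrac{\ell(P_t)P_t^{-1/2}}{2}(dL_t-dU_t)$, uniqueness of the Jordan decomposition forces $u^a_t\,dt=\tfrac{\ell(P_t)P_t^{-1/2}}{2}\,dL_t$ and $u^b_t\,dt=\tfrac{\ell(P_t)P_t^{-1/2}}{2}\,dU_t$. Substituting back into \eqref{eqn:CLMM_dynamic_fee}, and absorbing the extra factor $P_t$ in the $Y$‑fee, gives precisely (d):
$$
dF^x_t=\frac{1-\gamma}{\gamma}\cdot\frac{\ell(P_t)P_t^{-1/2}}{2}\,dL_t,\qquad dF^y_t=\frac{1-\gamma}{\gamma}\cdot\frac{\ell(P_t)P_t^{1/2}}{2}\,dU_t.
$$

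The remaining items are bookkeeping. For (b): $dF^x_t$ is a non‑negative multiple of $dL_t$ and $dF^y_t$ of $dU_t$, so by Proposition \ref{Prop:Myopic_dynamics}(c) each fee process charges only the corresponding reflection boundary of $Z_t$. For (c): $L_t$ and $U_t$ are continuous, non‑decreasing, and null at $0$ (Proposition \ref{Prop:Myopic_dynamics}(b)), while the densities $\tfrac{1-\gamma}{\gamma}\tfrac{\ell(P_t)P_t^{\mp1/2}}{2}$ are adapted, non‑negative, and pathwise continuous, hence locally bounded, by continuity of $P_t$ and of $\ell$ on the price range explored; therefore $F^x_t$ and $F^y_t$ are continuous, non‑decreasing, and of total variation $F^x_T,F^y_T<\infty$ on every $[0,T]$. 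For (a): being continuous and adapted, $F^x_t$ and $F^y_t$ are in particular predictable, equivalently they are Lebesgue--Stieltjes integrals of predictable locally bounded integrands against the continuous adapted increasing processes $L_t$ and $U_t$.

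I expect the main obstacle to be the separation step: the price dynamics alone pin down only the net rate $u^a_t-u^b_t$, whereas the corollary asserts a clean split of the individual, and generically singular (local‑time‑type), measures $u^a_t\,dt$ and $u^b_t\,dt$. Carrying this out cleanly requires the myopic structure (one‑directional trading at each reflection boundary of $Z_t$) together with the disjointness of the carriers of $dL_t$ and $dU_t$ from Proposition \ref{Prop:Myopic_dynamics}(c); it is also worth stressing that ``$u^a_t\,dt$'' in \eqref{eqn:CLMM_dynamic_fee} is in general singular with respect to Lebesgue measure, so the identities above are to be read in the sense of measures. Once these points are granted, the argument is a short assembly of Proposition \ref{Prop:Myopic_dynamics} with the order‑flow and fee equations of Section \ref{section:conti_trading}.
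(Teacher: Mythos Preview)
The paper gives no explicit proof of this corollary, stating only that it ``follows directly from Proposition \ref{Prop:Myopic_dynamics} and Section \ref{section:conti_trading}.'' Your proposal spells out precisely that route---combining the price/fee equations \eqref{eqn:CLMM_orderflow_xy}--\eqref{eqn:CLMM_dynamic_fee} with the decomposition of $\ln P_t$ from Proposition \ref{Prop:Myopic_dynamics}---and the Jordan-decomposition argument you single out for separating $u^a_t\,dt$ from $u^b_t\,dt$ (using mutual singularity of $dL_t$ and $dU_t$ together with one-sided myopic trading) is the correct way to make the ``follows directly'' rigorous.
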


\begin{remark}
In the absence of fees ($\gamma=1$), the arbitrageur's profit equals the LVR given by Equation \eqref{eqn:CLMM_LRV}, driven by the quadratic variation of the price process. However, with fees, the quadratic variation of $P_t$ vanishes due to the finite variation assumption. In this case, the arbitrageur's profit (or LP's loss) is
\begin{align} \label{eqn:arbitrage_fees}
    d \text{ARB}_t 
    &= (S_t - \gamma^{-1}P_t) dx^+_t + (\gamma P_t - S_t) dx^-_t \\
    &= \frac{\ell(P_t) P_t^{\frac12}}{2} \left[(e^{Z_t} - \gamma^{-1}) dL_t + (\gamma - e^{Z_t}) dU_t \right], \notag
\end{align}
where $dx^+_t = \max\{dx_t, 0\}$ and $dx^-_t = \max\{-dx_t, 0\}$. 

Proposition \ref{Prop:Myopic_dynamics}(c) implies that Equation \eqref{eqn:arbitrage_fees} equals zero, suggesting the myopic arbitrage model may not fully capture the continuous dynamics of AMM prices with fees. This motivates the exploration of more sophisticated models incorporating arbitrageur control mechanisms.
\end{remark}

\subsubsection{Finite-Horizon Arbitrage} \label{section:finite_arbitrage}
Instead of myopic arbitrage, we now consider a single arbitrageur who aims to maximize their profit over a finite time horizon $[0, T]$. This arbitrageur strategically controls the CLMM price movement through continuous trading, assuming no transaction costs.

Specifically, the arbitrageur controls the log-price movement rate, which is modeled as $d \ln P_t = u_t dt$. To formalize this, we define the following classes of admissible controls:
\begin{enumerate}
    \item $\mathcal{A}[t,T] := \{ u:[t,T] \times \Omega \to \mathbb{R} \mid u_s \text{ is $\mathcal{F}_t$-progressively measurable},\ \mathbb{E} [ \int^T_t |u_s|^2 ds ] < +\infty \}$ for any $T>0$;
    \item $\mathcal{A} := \bigcap_{T>0} \mathcal{A}[0,T]$.
\end{enumerate}
With these definitions, the arbitrageur's optimization problem can be formulated as
\begin{equation} \label{eqn:finite_arbitrage}
   \sup_{u \in \mathcal{A}[0,T]} \mathbb{E} \left[ \int_0^T (S_t - P_t) dx_t \right]. 
\end{equation}
Using the relation $dx_t = \frac12 \ell_t(P_t)P_t^{-\frac32} dP_t$ and the approximation $a \approx e^a - 1$ for small $a$, we can approximate the optimization problem \eqref{eqn:finite_arbitrage} in log-scale as
$$
\sup_{u \in \mathcal{A}[0,T]} \mathbb{E} \left[\int_0^T \ell_t(P_t) P_t^{-\frac12} Z_t u_t dt \right],
$$
This approximation simplifies the optimization problem while preserving its essential features, allowing for a more feasible analysis. Recall that the mispricing process, $Z_t$, is defined by Equation \eqref{eqn:SDE_mispricing}.

To enhance the tractability of the control problem, we introduce the following simplifying assumptions:
\begin{assumption} \label{Asm:control_arbitrage} \
\begin{itemize}
    \item (Square Rule for Liquidity) The time-dependent liquidity profile, $\ell_t(P_t)$, adheres to a "square rule" $\ell_t(P_t) P_t^{\frac12} \equiv K$, where $K$ is a constant. This assumption ensures a simple relationship between liquidity and price, simplifying the analysis.
    \item (Quadratic Control Penalty) We impose a quadratic penalty, $\frac{\lambda}{2} u^2_t$, on the control variable, $u_t$, where $\lambda>0$ is a penalty parameter. This penalty discourages excessive control actions by the arbitrageur and promotes smoother, more gradual adjustments to the price.
    \item (Mispricing Penalty) We incorporate a mispricing penalty, $\frac{\tau}{2} Z^2_t$, where $\tau>0$ is another penalty parameter. This penalty penalizes large deviations of the mispricing process, $Z_t$, from zero, ensuring the validity of the approximation $z \approx e^z - 1$ used earlier and bounding the reward function to ensure the stability of the solution.
\end{itemize} 
\end{assumption}
Under these assumptions, the finite-horizon arbitrage problem is simplified to
$$
\sup_{u \in \mathcal{A}[0,T]} \mathbb{E}_z \left[\int_0^T \left\{ Z_t  u_t - \frac{\lambda}{2} u^2_t - \frac{\tau}{2} Z^2_t\right\} dt \right].
$$

\begin{remark} \
\begin{itemize}
    \item The "square rule" can be interpreted as maintaining constant instantaneous liquidity in terms of the numéraire. Recall that in a CPMM, the LP's wealth is represented by $V(P) = 2 \ell P^{\frac12}$ (see Equation \eqref{eqn:G3M_value}). Thus, this condition ensures that the "virtual" wealth associated with the liquidity at the current price remains stable, even as the risky asset's price fluctuates.
    \item From an economic perspective, the mispricing penalty can be viewed as a cost associated with the risk that other arbitrageurs might exploit the price discrepancy before the current arbitrageur can react. This penalty reflects the competitive nature of the arbitrage environment and the potential for missed opportunities due to delayed actions.
    \item While our analysis focuses on the case without transaction costs and trading fees for clarity, a similar analysis can be conducted for the more general case. However, incorporating these factors involves significantly more technical details and is left for future research.
\end{itemize}
\end{remark}

Consider the value function
\begin{equation} \label{eqn:finite_log_arbitrage}
    V(t,z) = V_T(t,z) := \sup_{u \in \mathcal{A}[t,T]} \mathbb{E}_z \left[\int_t^T \left\{ Z_s  u_s - \lambda u^2_s - \frac{\tau}{2} Z^2_s\right\} ds \right].
\end{equation}
The dynamic programming principle leads to the Hamilton-Jacobi-Bellman (HJB) equation:
\begin{equation}
0 = \partial_t V + \sup_{u} \left\{ \frac{\sigma^2}{2} \partial_{zz} V + (\mu - u) \partial_{z} V + z u - \frac{\lambda}{2} u^2 - \frac{\tau}{2} z^2 \right\}
\end{equation}
with the boundary condition $V(T,z)=0$. The optimal control, $u^*$, is
\begin{equation} \label{eqn:optimal_control}
     u^* = \mathop{\arg\max}_u \{- u \partial_z V  + z u - \frac{\lambda}{2} u^2\} = \frac{z - \partial_z V}{\lambda}.
\end{equation}
Substituting this into the HJB Equation yields
\begin{equation} \label{eqn:finite_HJB}
0 = \partial_t V + \frac{\sigma^2}{2} \partial_{zz} V + \mu \partial_z V + \frac{(z - \partial_z V)^2}{2\lambda} - \frac{\tau}{2} z^2
\end{equation}
with terminal condition $V(T,z) = 0$.

\begin{proposition} \label{Prop:finite_verification}
Let $\varphi = \sqrt{\frac{\tau}{\lambda}}$ and $\xi = \frac{1+\sqrt{\lambda\tau}}{1-\sqrt{\lambda\tau}}$. Define
\begin{align*}
h_2(t) =& \sqrt{\lambda \tau} \frac{1 + \xi e^{2\varphi(T-t)}}{1 - \xi e^{2\varphi(T-t)}}, \\
h_1(t) =& \lambda\mu (1+\frac{1}{\sqrt{\lambda\tau}}) \left\{ \frac{e^{\varphi(T-t)} + e^{-\varphi(T-t)} - 2}{\xi e^{\varphi(T-t)} - e^{-\varphi(T-t)}} \right\}, \\
h_0(t) =& \int_t^T \left( \frac{\sigma^2}{2} h_2(s) + \mu h_1(s) + \frac{1}{2\lambda} h^2_1(s) \right) ds. 
\end{align*}
Then, the value function for the control problem \eqref{eqn:finite_log_arbitrage} is
$$
V_T(t,z) = \frac12 h_2(t) z^2 + h_1(t) z + h_0(t).
$$
\end{proposition}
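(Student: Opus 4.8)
The statement is a verification theorem for the linear--quadratic control problem \eqref{eqn:finite_log_arbitrage}, so I would follow the classical ansatz-and-verify route. The plan is: (i) posit the quadratic ansatz $V(t,z)=\tfrac12 h_2(t)z^2+h_1(t)z+h_0(t)$ with $C^1$ coefficients and, since $V(T,\cdot)=0$, terminal data $h_2(T)=h_1(T)=h_0(T)=0$; (ii) substitute $\partial_tV$, $\partial_zV=h_2z+h_1$, $\partial_{zz}V=h_2$ into the HJB equation \eqref{eqn:finite_HJB} and collect the coefficients of $z^2,z^1,z^0$. This yields a triangular system: a scalar Riccati ODE for $h_2$ alone, a first-order \emph{linear} ODE for $h_1$ whose coefficients involve only $h_2$, and a pure quadrature for $h_0$ once $h_1,h_2$ are known.

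Next I would solve this system in closed form. The Riccati equation for $h_2$ has constant coefficients; the substitution $g=1-h_2$ puts it in separable form $g'=(g^2-\lambda\tau)/\lambda$, and partial-fraction integration together with the terminal value produces $h_2$ in the stated shape, with $\varphi=\sqrt{\tau/\lambda}$ and $\xi=(1+\sqrt{\lambda\tau})/(1-\sqrt{\lambda\tau})$. One checks that the denominator $1-\xi e^{2\varphi(T-t)}$ never vanishes for $t\in[0,T]$ (because $\xi\notin(0,1)$ whenever $\lambda\tau\neq 1$), so $h_2$ is globally defined, bounded, $C^1$, and satisfies $h_2(t)<1$ on $[0,T]$. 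Feeding $h_2$ into the linear ODE for $h_1$ and solving via an integrating factor gives the stated $h_1$, and integrating the remaining $z^0$-identity from $t$ to $T$ gives $h_0$. This step is the most calculation-intensive part, but every sub-step is mechanical.

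For the verification step, note that $h_0,h_1,h_2\in C^1[0,T]$ makes $V$ a $C^{1,2}$ function. For an arbitrary admissible control $u\in\mathcal{A}[t,T]$ with $Z$ governed by \eqref{eqn:SDE_mispricing}, I would apply It\^o's formula to $s\mapsto V(s,Z_s)$ on $[t,T]$, use $V(T,\cdot)=0$, and substitute the HJB identity into the drift. Completing the square, the drift integrand equals $-\big(\text{running reward at }(Z_s,u_s)\big)-\tfrac1{2\lambda}\big(Z_s-\partial_zV(s,Z_s)-\lambda u_s\big)^2$, where the final term is $\le 0$ and vanishes precisely when $u_s=u^*_s:=\big(Z_s-\partial_zV(s,Z_s)\big)/\lambda$ as in \eqref{eqn:optimal_control}. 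Taking expectations gives $V(t,z)\ge\mathbb{E}_z\!\big[\int_t^T(\text{running reward})\,ds\big]$ for every admissible $u$, with equality along the feedback law $u^*_s=\big((1-h_2(s))Z_s-h_1(s)\big)/\lambda$; hence $V=V_T$, as claimed.

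The main obstacle is not conceptual but the integrability bookkeeping in the last step. One must show the stochastic integral $\int\sigma\,\partial_zV(s,Z_s)\,dW_s$ is a genuine (not merely local) martingale: since $\partial_zV(s,z)=h_2(s)z+h_1(s)$ is affine with bounded coefficients on $[t,T]$, this reduces to the uniform bound $\sup_{s\le T}\mathbb{E}[Z_s^2]<\infty$, which follows from $Z_s=z+\int_t^s(\mu-u_r)\,dr+\sigma(W_s-W_t)$ and Cauchy--Schwarz using $\mathbb{E}\int_t^T|u_r|^2dr<\infty$. One must also check that the optimizing feedback $u^*$ is itself admissible: the closed-loop equation for $Z$ is a time-inhomogeneous Ornstein--Uhlenbeck SDE with bounded coefficients and, because $1-h_2\ge c>0$ on $[0,T]$, genuinely mean-reverting drift, hence has a unique strong solution with finite second moments, which yields $\mathbb{E}\int_t^T|u^*_s|^2ds<\infty$, i.e.\ $u^*\in\mathcal{A}[t,T]$.
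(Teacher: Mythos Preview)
Your proposal is correct and follows essentially the same route as the paper: substitute the quadratic ansatz into \eqref{eqn:finite_HJB}, match coefficients to obtain the triangular ODE system for $h_2,h_1,h_0$, solve it, and then invoke a verification argument. The paper's proof is terser---it simply writes down the ODE system and then appeals to \cite[Theorem~3.5.2]{pham2009continuoustime} for verification---whereas you spell out the Riccati substitution $g=1-h_2$, the non-vanishing of the denominator, and the integrability bookkeeping for the It\^o verification; but the underlying strategy is identical.
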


\begin{proof}
Substituting the quadratic ansatz $V(t,z) = \frac12 h_2(t) z^2 + h_1(t) z + h_0(t)$ into the HJB equation \eqref{eqn:finite_HJB} and comparing coefficients, we obtain the following system of ODEs:
\begin{align*}
z^2 &: \dot h_2 + \frac1{\lambda}(1-h_2)^2 - \tau = 0,\\
z &: \dot h_1 + \mu h_2 - \frac1{\lambda} (1 - h_2)h_1 = 0, \\
1 &: \dot h_0 + \frac{\sigma^2}2 h_2 + \mu h_1 + \frac{1}{2\lambda} h_1^2 = 0
\end{align*}
with the terminal conditions $h_2(T) = h_1(T) = h_0(T) = 0$. Solving this system of ODEs yields the expressions for $h_2(t)$, $h_1(t)$, and $h_0(t)$ given in the proposition statement. A standard verification argument, such as the one outlined in \cite[Theorem 3.5.2]{pham2009continuoustime} is indeed the value function for the control problem \eqref{eqn:finite_log_arbitrage}.
\end{proof}

\begin{remark} \
\begin{enumerate}
    \item While the closed-form solution for $h_0(t)$ is generally complex, it can be computed using symbolic integration software like MATLAB. In the specific case where $\mu=0$, the expressions for $h_1(t)$ and $h_0(t)$ simplify significantly:
    $$
    h_1(t)=0, \quad h_2(t) = -\frac{\sigma^2}{2 \varphi} \ln \frac{\xi-1}{\xi e^{\varphi(T-t)} - e^{-\varphi(T-t)}}.
    $$
    \item As $T-t \to 0^+$ and $\lambda \to 0^+$, the optimal control approaches a "bang-bang" type
    $$
    u^* = \left\{\begin{array}{cl}
    +\infty & \mbox{ if } z > 0; \\
    0 & \mbox{ if } z = 0; \\
    -\infty & \mbox{ if } z < 0,
    \end{array}\right.
    $$
    This behavior illustrates why a myopic or impatient trader, whose trading rate is not penalized ($\lambda=0$), would optimally exploit any observed price discrepancy (mispricing) $z$ immediately.
\end{enumerate}
\end{remark}

\subsubsection{Discounted Infinite-Horizon Arbitrage Model} \label{section:discounted_arbitrage}
We now consider a single arbitrageur who aims to maximize their discounted long-term profit:
\begin{equation} \label{eqn:discounted_arbitrage}
\sup_{u \in \mathcal{A}} \mathbb{E} \left[\int_0^\infty e^{-\rho t} (S_t - P_t) dx_t  \right],
\end{equation}
where $\rho$ is the discount rate. Applying similar arguments and assumptions as in Section \ref{section:finite_arbitrage}, we can express the value function in logarithmic scale as
\begin{equation} \label{eqn:discounted_log_arbitrage}
V(z) = V_{\rho}(z) := \sup_{u \in \mathcal{A}} \mathbb{E}_z \left[\int_0^\infty e^{-\rho t} \left\{ Z_t  u_t - \frac{\lambda}{2} u^2_t - \frac{\tau}{2} Z^2_t \right\} dt \right].
\end{equation}
This leads to the following HJB equation
\begin{equation} \label{eqn:discounted_HJB}
    \rho V = \frac{\sigma^2}{2} \partial_{zz} V + \mu \partial_z V + \frac{(z-\partial_z V)^2}{2\lambda} - \frac{\tau}{2} z^2.
\end{equation}

\begin{proposition} \label{Prop:discounted_verification}
Let
\begin{align*}
h_2 &= 1 + \frac{\rho \lambda}{2} - \sqrt{\frac{\rho^2\lambda^2}{4} + \rho \lambda + \tau \lambda}, \\
h_1 &= -\frac{\mu}{\rho+\tau} \left[ (1 + \rho \lambda) h_2 - (1 - \tau \lambda) \right], \\
h_0 &= \frac{1}{\rho} \left\{ \left[ \frac{\sigma^2}{2} + \frac{\rho \mu^2}{(\rho+\tau)^2} \frac{(1 + \rho \lambda)^2}{2} \right] h_2 + \frac{\mu^2}{(\rho+\tau)^2} (1 - \lambda \tau) (\frac{\tau}{2} - \frac{\rho^2\lambda}{2}) \right\}. 
\end{align*}
Then, the value function for the control problem \eqref{eqn:discounted_log_arbitrage} is
$$
V_{\rho}(z) = \frac12 h_2 z^2 + h_1 z + h_0.
$$
\end{proposition}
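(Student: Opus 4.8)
The plan is to mirror the proof of Proposition \ref{Prop:finite_verification}, the only structural difference being that the stationary HJB equation \eqref{eqn:discounted_HJB} turns a system of ODEs into a purely algebraic system. First I would insert the quadratic ansatz $V(z)=\tfrac12 h_2 z^2 + h_1 z + h_0$ into \eqref{eqn:discounted_HJB}. With $\partial_z V = h_2 z + h_1$ and $\partial_{zz}V = h_2$ one has $(z-\partial_z V)^2 = (1-h_2)^2 z^2 - 2(1-h_2)h_1 z + h_1^2$, and matching coefficients of $z^2$, $z^1$, $z^0$ gives
\begin{align*}
z^2 &: \ \rho\lambda h_2 = (1-h_2)^2 - \tau\lambda,\\
z^1 &: \ (\rho\lambda + 1 - h_2)\,h_1 = \lambda\mu h_2,\\
z^0 &: \ \rho h_0 = \tfrac{\sigma^2}{2} h_2 + \mu h_1 + \tfrac{1}{2\lambda} h_1^2.
\end{align*}

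The first line is the quadratic $h_2^2 - (2+\rho\lambda)h_2 + (1-\tau\lambda)=0$, with roots $1+\tfrac{\rho\lambda}{2}\pm\sqrt{\tfrac{\rho^2\lambda^2}{4}+\rho\lambda+\tau\lambda}$; I would take the smaller (minus) root, as in the statement, the point being that only this root satisfies $h_2<1$, which is precisely what makes the optimal feedback stabilizing and the value function finite (the larger root solves the HJB PDE as well but produces an explosive, non-optimal feedback). The second line gives $h_1 = \lambda\mu h_2/(1+\rho\lambda - h_2)$, and a short manipulation — writing $1+\rho\lambda-h_2$ via the complementary root and eliminating $h_2^2$ through $h_2^2 = (2+\rho\lambda)h_2 - (1-\tau\lambda)$ — reduces this to the stated $h_1 = -\tfrac{\mu}{\rho+\tau}\big[(1+\rho\lambda)h_2 - (1-\tau\lambda)\big]$. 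Substituting $h_1$ into the $z^0$ equation and again simplifying with the $h_2$-quadratic yields the stated closed form for $h_0$. These are routine, if tedious, algebraic reductions.

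The substantive step is the verification argument showing the candidate is genuinely the supremum in \eqref{eqn:discounted_log_arbitrage}. The optimizer of the Hamiltonian is the linear feedback $u^*_t = \tfrac1\lambda\big((1-h_2)Z_t - h_1\big)$, so under $u^*$ the closed-loop mispricing is a Gaussian Ornstein–Uhlenbeck-type diffusion, mean-reverting exactly because $h_2<1$; hence $\sup_{t\ge0}\mathbb{E}_z[Z_t^2]<\infty$ and the transversality condition $e^{-\rho t}\mathbb{E}_z[V_\rho(Z_t)]\to 0$ holds. I would then apply Itô's formula to $e^{-\rho t}V_\rho(Z_t)$ along an arbitrary admissible $u\in\mathcal{A}$, take expectations, use the HJB inequality for the drift term and let $t\to\infty$ to obtain $V_\rho(z)\ge \mathbb{E}_z\big[\int_0^\infty e^{-\rho s}(\cdots)\,ds\big]$, with equality for $u=u^*$.

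The main obstacle I expect is the integrability/transversality bookkeeping for a \emph{general} $u\in\mathcal{A}$: the class $\mathcal{A}$ only imposes the local bound $\mathbb{E}\int_0^T|u_s|^2ds<\infty$ and does not a priori control the growth of $\mathbb{E}_z[Z_t^2]$, so one cannot directly pass to the limit. I would handle this by a standard localization with stopping times $\tau_n\uparrow\infty$, using the discount factor together with the dissipative running penalty $-\tfrac\tau2 z^2$ to dominate the positive quadratic term $\tfrac12 h_2 z^2$ and pass to the limit by monotone/dominated convergence — or, more economically, invoke the infinite-horizon discounted verification theorem in the spirit of \cite[Theorem 3.5.2]{pham2009continuoustime} (or Fleming–Soner), whose hypotheses are satisfied in this linear-quadratic setting. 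A secondary point deserving one sentence is making the root selection for $h_2$ rigorous: among the two HJB solutions, requiring an attained, finite value function singles out the stated one.
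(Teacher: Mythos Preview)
Your proposal is correct and follows essentially the same route as the paper: insert the quadratic ansatz into \eqref{eqn:discounted_HJB}, solve the resulting algebraic system, obtain the two roots $h_2^\pm$, and discriminate between them via a verification/transversality argument (the paper invokes \cite[Theorem~3.5.3]{pham2009continuoustime} and checks $\limsup_{T\to\infty}e^{-\rho T}\mathbb{E}[V(Z_T)]=0$ for both candidates, finding it fails for $V^+$). Your discussion of the integrability bookkeeping for a generic $u\in\mathcal{A}$ is in fact more careful than the paper's, which simply defers to the cited verification theorem; note that the relevant reference is the infinite-horizon Theorem~3.5.3 rather than 3.5.2.
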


\begin{proof}
Substituting the quadratic ansatz $V(z) = \frac12 h_2 z^2 + h_1 z + h_0$ into the HJB equation \eqref{eqn:discounted_HJB} and comparing coefficients, we obtain the following system of equations:
\begin{align*}
z^2 &: \ \rho h_2 = \frac{1}{\lambda} (1-h_2)^2 - \tau, \\
z &: \ \rho h_1 = \mu h_2 - \frac{1}{\lambda} (1-h_2)h_1, \\
1 &: \ \rho h_0 = \frac{\sigma^2}{2} h_2 + \mu h_1 + \frac{1}{2\lambda} h_1^2.
\end{align*}
The first equation is a quadratic equation for $h_2$, which has two solutions
$$
h^\pm_2 = 1 + \frac{\rho\lambda}{2} \pm \sqrt{ \frac{\rho^2\lambda^2}{4} + \rho \lambda + \tau \lambda}.
$$
For each solution $h_2^\pm$, we can solve for $h_1^\pm$ and $h_0^\pm$ using the second and third equations, respectively. This gives us two candidate value functions:
$$
V^{\pm}(z) = h_2^\pm z^2 + h_1^\pm z + h_0.
$$
The corresponding optimal controls are
\begin{align*}
u^{\pm}(z) 
&= \frac{1-h_2}{\lambda} z - \frac{\mu}{\rho+\tau} \left(\rho + \tau + \frac{h_2-1}{\lambda} \right) \notag \\
&= \left(\mp \sqrt{\frac{\rho^2}{4} + \frac{\rho}{\lambda} + \frac{\tau}{\lambda}} - \frac{\rho}{2} \right) \left( z - \frac{\mu}{\rho+\tau} \right) + \mu. 
\end{align*}
Substituting these controls into the SDE for the mispricing process \eqref{eqn:SDE_mispricing}, we get
$$
dZ^{\pm}_t = \left(\mu - u^\pm(Z_t)\right) dt + \sigma dW_t 
= \left(\mp \sqrt{\frac{\rho^2}{4} + \frac{\rho}{\lambda} + \frac{\tau}{\lambda}} - \frac{\rho}{2} \right) \left( \frac{\mu}{\rho+\tau} - Z_t \right) dt + \sigma dW_t.
$$

To determine the true value function, we apply the verification theorem \cite[Theorem 3.5.3]{pham2009continuoustime}, which requires that $\mathop{\lim\sup}_{T \to \infty} e^{-\rho T} \mathbb{E}[ V(Z_T)] = 0$. A direct computation reveals that this condition holds for $V^-(z)$ but not for $V^+(z)$. Therefore, the true value function is $V(z) = V^-(z)$, proving the proposition.
\end{proof}

Under the optimal arbitrage strategy
\begin{align} \label{eqn:discounted_optimal_control_explicit}
u^*(z) = \left(\sqrt{\frac{\rho^2}{4} + \frac{\rho}{\lambda} + \frac{\tau}{\lambda}} - \frac{\rho}{2} \right) \left( z - \frac{\mu}{\rho+\tau} \right) + \mu,
\end{align}
the mispricing process $Z_t$ follows an Ornstein-Uhlenbeck (OU) process
$$
dZ_t = \left(\mu - u^*(Z_t)\right) dt + \sigma dW_t 
= \left(\sqrt{\frac{\rho^2}{4} + \frac{\rho}{\lambda} + \frac{\tau}{\lambda}} - \frac{\rho}{2} \right) \left( \frac{\mu}{\rho+\tau} - Z_t \right) dt + \sigma dW_t.
$$

\subsubsection{Ergodic Arbitrage Model}
Finally, we consider an arbitrageur aiming to maximize their long-term average profit, formulated as
\begin{equation} \label{eqn:ergodic_arbitrage}
\sup_{\tilde{u}} \lim_{T \to \infty} \frac{1}{T} \mathbb{E} \left[\int_0^T (S_t - P_t) dx_t  \right].
\end{equation}
Under the previous assumptions and simplifications, this reduces to
\begin{equation} \label{eqn:ergodic_log_arbitrage}
    J(z;u) = \lim_{T \to \infty} \frac{1}{T} \mathbb{E} \left[ \int_0^T  \left\{ Z_t  u_t - \frac{\lambda}{2} u^2_t - \frac{\tau}{2} Z^2_t \right\} dt \right], \quad \eta = \sup_{u \in \mathcal{A}} J(z;u).
\end{equation}

Analyzing this problem by taking limits in the discounted infinite-horizon ($\rho \to 0$) and finite-horizon cases ($T \to \infty$), we obtain:

To analyze the ergodic arbitrage problem \eqref{eqn:ergodic_log_arbitrage}, we follow the general method used in works such as \cite{arisawa1998ergodic, arapostathis2011ergodic}. By taking limits in the discounted infinite-horizon ($\rho \to 0$) and finite-horizon cases ($T \to \infty$), we obtain:

\begin{theorem} \label{Thm:ergodic_constant}
There exists a constant $\hat{\eta} \in \mathbb{R}$ such that for all $z \in  \mathbb{R}$,
$$
\hat{\eta} = \lim_{\rho \to 0} \rho V_{\rho}(z) = \lim_{T \to \infty} \frac{1}{T} V_T(0,z)
$$
where $V_{\rho}$ and $V_T$ are the value functions defined in \eqref{eqn:discounted_log_arbitrage} and \eqref{eqn:finite_log_arbitrage}, respectively. Moreover,
$$
\eta = \hat{\eta} = \frac{\sigma^2}{2} + \frac{\mu^2}{2} \left(\frac{1}{\tau} - \lambda \right).
$$
\end{theorem}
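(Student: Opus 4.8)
The plan is to leverage the closed-form value functions already obtained in Propositions~\ref{Prop:discounted_verification} and~\ref{Prop:finite_verification}: the two limits in the statement can then be evaluated almost by inspection (the Abelian half), after which a verification argument against the ergodic HJB equation identifies the common limit with $\eta$ and pins down its value.

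First I would establish the limit identities. From Proposition~\ref{Prop:discounted_verification}, $\rho V_{\rho}(z)=\tfrac12(\rho h_2)z^2+(\rho h_1)z+\rho h_0$, and $h_2,h_1$ remain bounded as $\rho\to0$ (indeed $h_2\to1-\sqrt{\lambda\tau}$ and $h_1$ has a finite limit), so the $z^2$ and $z$ terms vanish and $\lim_{\rho\to0}\rho V_{\rho}(z)=\lim_{\rho\to0}\rho h_0$ exists independently of $z$; substituting the closed form of $h_0$ and simplifying by means of the limiting Riccati identity $(1-h_2)^2=\lambda\tau$ gives the claimed constant. For the finite-horizon limit, I observe that $h_2(t)$ and $h_1(t)$ in Proposition~\ref{Prop:finite_verification} depend on $t$ only through the time-to-maturity $r=T-t$, so writing $\psi(r)$ for the integrand $\tfrac{\sigma^2}{2}h_2+\mu h_1+\tfrac1{2\lambda}h_1^2$ expressed as a function of $r$, the formula for $h_0$ there becomes $h_0(0)=\int_0^T\psi(r)\,dr$, and $\psi(r)$ converges as $r\to\infty$ to the same constant; since $\tfrac1T h_2(0)z^2\to0$ and $\tfrac1Th_1(0)z\to0$, Ces\`aro's lemma yields $\lim_{T\to\infty}\tfrac1TV_T(0,z)=\lim_{T\to\infty}\tfrac1T\int_0^T\psi(r)\,dr=\psi(\infty)$, again the same constant. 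This establishes the existence of $\hat\eta$, the equality of the three quantities, and the explicit value.

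It then remains to prove $\eta=\hat\eta$. I would solve the ergodic HJB equation $\hat\eta=\tfrac{\sigma^2}{2}v''+\mu v'+\tfrac{(z-v')^2}{2\lambda}-\tfrac{\tau}{2}z^2$ with the quadratic ansatz $v(z)=\tfrac12 b_2 z^2+b_1 z$; matching powers of $z$ forces $(1-b_2)^2=\lambda\tau$, where one must take the root $b_2=1-\sqrt{\lambda\tau}$ so that $1-b_2>0$, then $b_1=\lambda\mu b_2/(1-b_2)$, and the constant term recovers the same value of $\hat\eta$. For the upper bound $\eta\le\hat\eta$, I apply It\^o's formula to $v(Z_t)$ for an arbitrary $u\in\mathcal A$ and use the HJB inequality $\tfrac{\sigma^2}{2}v''+(\mu-u)v'+zu-\tfrac{\lambda}{2}u^2-\tfrac{\tau}{2}z^2\le\hat\eta$ to obtain $\mathbb{E}_z\big[\int_0^T(Z_tu_t-\tfrac\lambda2 u_t^2-\tfrac\tau2 Z_t^2)\,dt\big]\le T\hat\eta+v(z)-\mathbb{E}_z[v(Z_T)]$; because $b_2>0$ the function $v$ is bounded below, so $-\mathbb{E}_z[v(Z_T)]$ is bounded uniformly in $T$, and dividing by $T$ and letting $T\to\infty$ gives $J(z;u)\le\hat\eta$. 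For the matching lower bound, I plug in the stationary Markov control $u^*(z)=(z-v'(z))/\lambda$; since $1-b_2=\sqrt{\lambda\tau}>0$, the closed loop is an ergodic Ornstein--Uhlenbeck process with Gaussian stationary law $\pi$, and integrating the pointwise HJB identity against $\pi$ together with $\mathbb{E}_{\pi}[\tfrac{\sigma^2}{2}v''+(\mu-u^*)v']=0$ (stationarity, legitimate since $v$ grows quadratically and $\pi$ has all moments) shows that the stationary average reward equals $\hat\eta$; ergodicity of the controlled process then gives $J(z;u^*)=\hat\eta$, whence $\eta\ge\hat\eta$.

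The step I expect to be the main obstacle is the Tauberian half of the first paragraph---confirming that $\lim_{T\to\infty}\tfrac1TV_T(0,z)$ genuinely exists and coincides with $\lim_{\rho\to0}\rho V_{\rho}(z)$, rather than just a two-sided inequality---and, relatedly, the admissibility of $u^*$ and the claimed behaviour of the boundary terms in the verification. These are standard in linear--quadratic ergodic control (cf.\ \cite{arisawa1998ergodic,arapostathis2011ergodic}), but require some care with a priori second-moment estimates for $Z_t$ under admissible controls. The remainder is explicit algebra with the formulas already derived.
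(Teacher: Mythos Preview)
Your treatment of the two limits $\lim_{\rho\to0}\rho V_\rho(z)$ and $\lim_{T\to\infty}\tfrac1T V_T(0,z)$ is essentially the same as the paper's: both proofs evaluate the explicit quadratic value functions from Propositions~\ref{Prop:finite_verification} and~\ref{Prop:discounted_verification} directly, observing that the $z^2$ and $z$ coefficients vanish in the limit and only the constant term survives. Your Ces\`aro argument for the finite-horizon case is spelled out in more detail than the paper (which merely reports the limit, appealing to a MATLAB closed form for $h_0$ when $\mu\neq0$), but the underlying idea is the same.

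The genuine difference is in the identification $\eta=\hat\eta$. The paper dispatches this in one line by invoking \cite[Lemma~3.6.4]{arapostathis2011ergodic}, whereas you supply a self-contained verification: an upper bound via It\^o's formula applied to the ergodic-HJB solution $v$, and a matching lower bound by showing the feedback control $u^*$ drives $Z$ to an ergodic OU process with stationary reward $\hat\eta$. Your route is more transparent in this LQ setting and avoids importing machinery, at the cost of one extra hypothesis you have slipped in: the claim ``because $b_2>0$ the function $v$ is bounded below'' requires $\lambda\tau<1$, which the paper does not assume. When $\lambda\tau\ge1$ the potential $v$ is concave and unbounded below, so the boundary term $-\mathbb{E}_z[v(Z_T)]$ is not automatically controlled for an arbitrary admissible $u$; one then needs exactly the a~priori second-moment estimates you flag at the end (and which the cited lemma handles in general). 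With that caveat noted, your argument is correct and arguably more informative than the paper's citation.
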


\begin{proof}
We first show that $\lim_{\rho \to 0} \rho V_{\rho}(z)$ exists. From the explicit expression for $V_{\rho}$ derived in Proposition \ref{Prop:discounted_verification}, we have
\begin{align*}
\lim_{\rho \to 0} \rho V_{\rho}(z)
= \lim_{\rho \to 0} \rho \left[ \frac{1}{2} h_2 z^2 + h_1 z + h_0 \right]
= \frac{\sigma^2}{2} + \frac{\mu^2}{2} \left( \frac{1}{\tau} - \lambda \right).
\end{align*}
This limit exists and is the same for all $z \in \mathbb{R}$.

Next, we establish that $\lim_{T \to \infty} \frac{1}{T} V_T(0,z)$ exists. From the explicit expression for $V_T$ from Proposition \ref{Prop:finite_verification}\footnote{To verify the case when $\mu \neq 0$, we utilize the explicit closed-form solution computed by MATLAB.}, we have
\begin{align*}
\lim_{T \to \infty} \frac{1}{T} V_T(0, z)
= \lim_{T \to \infty} \frac{1}{T} \left[ \frac{1}{2} h_2(0) z^2 + h_1(0) z + h_0(0) \right]
= \frac{\sigma^2}{2} + \frac{\mu^2}{2} \left( \frac{1}{\tau} - \lambda \right).
\end{align*}
Again, this limit also exists and is finite for all $z \in \mathbb{R}$.

Finally, we need to show that $\eta = \hat{\eta}$, which is a consequence of \cite[Lemma 3.6.4]{arapostathis2011ergodic}.
\end{proof}

\begin{corollary}
The optimal Markov control for the ergodic control problem \eqref{eqn:discounted_log_arbitrage} is
$$
u^*(z) = \sqrt{\frac{\tau}{\lambda}} z + \left(1-\frac{1}{\sqrt{\lambda\tau}}\right) \mu.
$$
Furthermore, under this control, the mispricing process is driven by the OU process
$$
dZ_t = \sqrt{\frac{\tau}{\lambda}} \left(\frac{\mu}{\tau} - Z_t \right) dt + \sigma dW_t.
$$
\end{corollary}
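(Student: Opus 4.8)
The plan is to obtain the ergodic optimizer as the vanishing-discount limit $\rho \to 0^+$ of the discounted optimal control \eqref{eqn:discounted_optimal_control_explicit}, identify the associated relative value function, and then close the argument with a standard ergodic verification estimate resting on Theorem~\ref{Thm:ergodic_constant}.

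\emph{Step 1 (pass to the limit in the control).} Starting from the explicit discounted optimizer
$$u^*_\rho(z) = \left(\sqrt{\tfrac{\rho^2}{4} + \tfrac{\rho}{\lambda} + \tfrac{\tau}{\lambda}} - \tfrac{\rho}{2}\right)\left(z - \tfrac{\mu}{\rho+\tau}\right) + \mu,$$
I would let $\rho \to 0^+$ term by term: $\sqrt{\rho^2/4 + \rho/\lambda + \tau/\lambda} \to \sqrt{\tau/\lambda}$, $\rho/2 \to 0$, and $\mu/(\rho+\tau) \to \mu/\tau$, so that $u^*_\rho(z) \to \sqrt{\tau/\lambda}\,(z - \mu/\tau) + \mu = \sqrt{\tau/\lambda}\,z + \bigl(1 - 1/\sqrt{\lambda\tau}\bigr)\mu =: u^*(z)$, where I used $\sqrt{\tau/\lambda}/\tau = 1/\sqrt{\lambda\tau}$. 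Plugging $u^*$ into the mispricing SDE \eqref{eqn:SDE_mispricing} and simplifying $\mu - u^*(z) = \mu/\sqrt{\lambda\tau} - \sqrt{\tau/\lambda}\,z = \sqrt{\tau/\lambda}\,(\mu/\tau - z)$ (using $\sqrt{\lambda\tau}\,\sqrt{\tau/\lambda} = \tau$) yields exactly the claimed Ornstein--Uhlenbeck dynamics $dZ_t = \sqrt{\tau/\lambda}\,(\mu/\tau - Z_t)\,dt + \sigma\,dW_t$. (The same limit can be reached from the finite-horizon optimizer as $T \to \infty$; the two routes agree.)

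\emph{Step 2 (relative value function and ergodic HJB).} I would extract $w(z) := \lim_{\rho\to0^+}\bigl(V_\rho(z) - V_\rho(0)\bigr)$. Using the explicit $h_2, h_1$ of Proposition~\ref{Prop:discounted_verification}, $h_2 \to 1 - \sqrt{\lambda\tau}$ and $h_1 \to \mu\sqrt{\lambda/\tau}\,(1-\sqrt{\lambda\tau})$, hence $w(z) = \tfrac12(1-\sqrt{\lambda\tau})z^2 + \mu\sqrt{\lambda/\tau}\,(1-\sqrt{\lambda\tau})z$. Letting $\rho\to0$ in the discounted HJB equation \eqref{eqn:discounted_HJB} and invoking Theorem~\ref{Thm:ergodic_constant} for the constant $\hat\eta = \tfrac{\sigma^2}{2} + \tfrac{\mu^2}{2}\bigl(\tfrac1\tau - \lambda\bigr)$, the pair $(\hat\eta, w)$ solves the ergodic HJB equation $\hat\eta = \tfrac{\sigma^2}{2}w'' + \mu w' + \tfrac{(z-w')^2}{2\lambda} - \tfrac{\tau}{2}z^2$, and the maximiser of its Hamiltonian is precisely $u^*(z) = (z - w'(z))/\lambda$, reproducing the formula of Step~1.

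\emph{Step 3 (verification).} For any admissible $u\in\mathcal{A}$ I would apply It\^o's formula to $w(Z_t)$ along the controlled trajectory, use the ergodic HJB inequality $\tfrac{\sigma^2}{2}w'' + (\mu-u)w' + zu - \tfrac{\lambda}{2}u^2 - \tfrac{\tau}{2}z^2 \le \hat\eta$ (with equality iff $u = u^*$), divide by $T$, and send $T\to\infty$; since $w$ has at most quadratic growth and, under $u^*$, $Z_t$ is a stable OU process with finite stationary second moment, the transversality term $\tfrac1T\,\Eof{w(Z_T) - w(z)}$ vanishes, giving $J(z;u) \le \hat\eta$ for all $u$ and $J(z;u^*) = \hat\eta$. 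This identifies $u^*$ as the optimal Markov control for \eqref{eqn:ergodic_log_arbitrage}. The main obstacle is precisely this step: one must justify the transversality/uniform-integrability condition over the whole admissible class (not merely along the optimal path) and confirm that the vanishing-discount candidate is the genuine ergodic optimizer rather than a spurious critical point; both are handled by the moment and recurrence estimates for linear-drift SDEs together with the general vanishing-discount machinery of \cite{arapostathis2011ergodic} already used in the proof of Theorem~\ref{Thm:ergodic_constant}, leaving only the routine limit computations of Steps~1--2.
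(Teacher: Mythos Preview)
Your proposal is correct and reaches the same conclusions, but the route differs from the paper's. The paper proceeds \emph{directly}: it invokes the ergodic HJB equation (citing \cite[Theorem~3.6.6]{arapostathis2011ergodic}), substitutes the quadratic ansatz $V(z)=\tfrac12 h_2 z^2+h_1 z+h_0$, solves the resulting algebraic system, and selects the root $h_2=1-\sqrt{\lambda\tau}$ on admissibility grounds; the optimal $u^*$ then drops out of $u^*=(z-\partial_z V)/\lambda$ and the OU dynamics follow by substitution into \eqref{eqn:SDE_mispricing}. You instead run the \emph{vanishing-discount} programme: send $\rho\to0$ in the explicit discounted optimizer \eqref{eqn:discounted_optimal_control_explicit} and in $V_\rho(z)-V_\rho(0)$, identify the limit pair $(\hat\eta,w)$ as a solution of the ergodic HJB, and finish with an It\^o-based verification. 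Your approach is more constructive---it explains why the admissible root $1-\sqrt{\lambda\tau}$ is singled out, since it is the one surviving the limit---whereas the paper's approach is shorter, offloading the verification (including the transversality issue you correctly flag in Step~3 for arbitrary $u\in\mathcal A$) entirely to the cited theorem rather than redoing the It\^o estimate by hand. Note that your limiting $h_1=\mu\sqrt{\lambda/\tau}(1-\sqrt{\lambda\tau})$ coincides with the paper's $h_1=\mu(\sqrt{\lambda/\tau}-\lambda)$, so the two relative value functions agree.
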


\begin{proof}
According to \cite[Theorem 3.6.6]{arapostathis2011ergodic}, the ergodic HJB equation for the control problem \eqref{eqn:discounted_log_arbitrage} is
$$
\eta = \sup_{u} \left\{ \frac{\sigma^2}{2} \partial_{zz} V + (\mu - u) \partial_z V + z u - \frac{\lambda}{2} u^2 - \frac{\tau}{2} z^2 \right\}
$$
where $\eta$ is the ergodic constant given in Theorem \ref{Thm:ergodic_constant} and $V$ is the relative value function, uniquely defined up to a constant. The supremum is attained at $u^*$, given again by equation \eqref{eqn:optimal_control}.

Using the ansatz $V(z) = \frac{1}{2} h_2 z^2 + h_1 z + h_0$, we obtain the following system of equations:
\begin{align*}
\frac{1}{\lambda} (1-h_2)^2 - \tau &= 0, \\
\mu h_2 - \frac{1}{\lambda} (1-h_2)h_1 &= 0, \\
\frac{\sigma^2}{2} h_2 + \mu h_1 + \frac{1}{2\lambda} h_1^2 &= \eta.
\end{align*}
To ensure that the optimal control belongs to the set of admissible controls, we select the root $h_2 = 1 -\sqrt{\lambda\tau}$.  This leads to $h_1 = \frac{\lambda \mu h_2}{1 - h_2} = \mu (\sqrt{\frac{\lambda}{\tau}} - \lambda)$. Therefore, the optimal ergodic control in feedback form is
$$
u^*(z) = \frac{z-\partial_z V}{\lambda} = \frac{(1-h_2)z - h_1}{\lambda} = \sqrt{\frac{\tau}{\lambda}} z + \left(1-\frac{1}{\sqrt{\lambda\tau}}\right) \mu.
$$
The SDE for the mispricing process, $Z_t$, follows immediately from substituting this control into equation \eqref{eqn:SDE_mispricing}.
 \end{proof}
\section{Conclusion} \label{section:Conclusion}
This paper establishes a comprehensive mathematical framework for analyzing Concentrated Liquidity Market Makers (CLMMs) in continuous time, offering valuable insights into these essential DeFi primitives. We introduce a novel approach to modeling liquidity profiles as measure-valued processes, precisely characterizing how concentrated liquidity affects market behavior and trading outcomes. Our investigation of arbitrage dynamics reveals that trading fees fundamentally constrain admissible price processes to those with finite variation, preventing infinite fee generation. This finding has significant implications for CLMM design and highlights the need to adapt traditional continuous-time finance models for DeFi applications.

Furthermore, we derive closed-form solutions for optimal arbitrage strategies under three distinct scenarios: myopic arbitrage, finite-horizon optimization, and infinite-horizon optimization with discounted and ergodic controls. These solutions provide a deeper understanding of how rational actors interact with CLMMs, influencing price discovery and market efficiency. Notably, our analysis demonstrates that optimal arbitrage in the ergodic case leads to an Ornstein-Uhlenbeck process for the mispricing process, suggesting a natural mean-reversion tendency in CLMM markets.

Our framework illuminates practical considerations for CLMM design. The constraint on price processes imposed by trading fees necessitates careful calibration of fee parameters to balance revenue generation with market efficiency. Additionally, the relationship between concentrated liquidity provision and market stability underscores the importance of well-designed incentive mechanisms for optimal liquidity distribution. Our characterization of price process behavior under different arbitrage models offers valuable guidance for developing robust and manipulation-resistant price oracles.

Future research directions include incorporating transaction costs and trading fees into the arbitrage strategy analysis, developing optimal liquidity provision strategies under various market conditions, investigating the impact of discrete block times on continuous-time approximations, analyzing the stability and convergence of complex fee structures, and studying the interactions between diverse market participants. As DeFi continues to evolve, this rigorous approach to analyzing market mechanisms will become increasingly crucial for informed protocol design and optimization.

\section*{Acknowledgement}
We gratefully acknowledge the valuable feedback received from participants at seminars and lectures delivered at Bocconi University, the Dipartimento di Scienze per l'Economia e l'Impresa at the Università degli Studi di Firenze, and Ritsumeikan University. We extend special thanks to Jimmy Risk for their contributions to generating figures from empirical liquidity profiles and to Wun-Cing Liou for his assistance with Uniswap data collection.

S.-N. T. acknowledges the support of the National Science and Technology Council of Taiwan under grant number 111-2115-M-007-014-MY3. He expresses his gratitude to Shuenn-Jyi Sheu and Wei-Cheng Wang for their insightful discussions and assistance, and to Jie-Hong Lai for his help with MATLAB. T.-H. W. expresses his gratitude for the hospitality of the Mathematics Division of the National Center for Theoretical Sciences during his stay in Taipei in August 2024.

\bibliographystyle{alpha}
\bibliography{Reference}
\end{document}